\documentclass{article}[12pt]
\usepackage{amsmath}
\usepackage[normalem]{ulem}
\usepackage{amsfonts}
%\usepackage{mathabx}
%\usepackage{MnSymbol}
%%%   Code For Theorem Environments
%%%%%%%%% from:
%%%%%%%%%  http://www.maths.tcd.ie/~dwilkins/LaTeXPrimer/Theorems.html
%%%%%   

\usepackage{color}
% SIDE MARGINS:
\oddsidemargin -20pt    %   Note that \oddsidemargin = \evensidemargin
\evensidemargin -20pt
\marginparwidth 50pt 
\marginparsep 5pt       % Horizontal space between outer margin and 
                        % marginal note

% VERTICAL SPACING:        
                         % Top of page:
\topmargin -80pt           %    Nominal distance from top of page to top of
                         %    box containing running head.

% DIMENSION OF TEXT:

\textheight 710pt        % Height of text (including footnotes and figures, 
                         % excluding running head and foot).
\textwidth 515pt         % Width of text line.

%%%%%%%%%%5

   \newtheorem{theorem}{Theorem}[section]
   \newtheorem{lemma}[theorem]{Lemma}
   \newtheorem{proposition}[theorem]{Proposition}
   \newtheorem{corollary}[theorem]{Corollary}

\newtheorem{remark}[theorem]{Remark}

   \newenvironment{proof}[1][Proof]{\begin{trivlist}
   \item[\hskip \labelsep {\bfseries #1}]}{\end{trivlist}}

   \newcommand{\qed}{\nobreak \ifvmode \relax \else
      \ifdim\lastskip<1.5em \hskip-\lastskip
      \hskip1.5em plus0em minus0.5em \fi \nobreak
      \vrule height0.75em width0.5em depth0.25em\fi}

%%%%%%%% End Code for Theorem Environments
%%%%%%%%

%%%%%%%  Equation Numbering

   \numberwithin{equation}{section}

%%%%%%%%

%%%%%%%%
%%%%%%%% Commands for Parameters and Structures

\newcommand{\be}{\begin{equation}} 
\newcommand{\ee}{\end{equation}}
\newcommand{\beq}{\begin{eqnarray}}
\newcommand{\eeq}{\end{eqnarray}}
\newcommand{\bt}{\begin{theorem}}
\newcommand{\et}{\end{theorem}}
\newcommand{\bl}{\begin{lemma}}
\newcommand{\el}{\end{lemma}}
\newcommand{\bc}{\begin{corollary}}
\newcommand{\ec}{\end{corollary}}
\newcommand{\bp}{\begin{prop}}
\newcommand{\ep}{\end{prop}}
\newcommand{\ba}{\begin{array}}
\newcommand{\ea}{\end{array}}

\newcommand{\bbN}{\mathbb{N}}        %%%%%%%%%%%%%%%% Natural Numbers
\newcommand{\bbZ}{\mathbb{Z}}   %%%%%%%%%%%%%%%%%%%%
\newcommand{\bbC}{\mathbb{C}}   %%%%%%%%%%%%%%%%%%%%
\newcommand{\bbt}{\mathbb{T}}   %%%%%%%%%%%%%%%%%%%%Torus
\newcommand{\bbT}{\mathbb{T}}
\newcommand{\bbr}{\mathbb{R}}   %%%%%%%%%%%%%%%%%%%%
\newcommand{\bbR}{\mathbb{R}}   
\newcommand{\bbQ}{\mathbb{Q}}   %%%%%%%%%%%%%%%%%%%%
\newcommand{\bbI}{\mathbb{I}}   %%%%%%%%%%%%%%%%%%%%

\newcommand{\de}{\mathrm{d}} %%%%%%%%%%%%%%%%%%%%%%%%% \int *** \de t

\newcommand{\PLg}{\textrm{PL}_\gamma}

\newcommand{\slr}{  \textrm{ {\bf SL}}_2 (\mathbb{R})  }

   %%%%%%%%%%%%%%%% Fourier Transform

\newcommand{\ham}{{\mathfrak h}}

\newcommand{\ly}{\mathcal{L}}

%%%%%%%%%%%%%%%%%%%%%%%%%%%%%%%%%%%%%%%%%%%%%%%%%%%%%%%%%%%%%%%%%%%%%
%%%%%%%%%%%%%%%%%%%%%%  Begin Document  %%%%%%%%%%%%%%%%%%%%%%%%%%%%%
%%%%%%%%%%%%%%%%%%%%%%%%%%%%%%%%%%%%%%%%%%%%%%%%%%%%%%%%%%%%%%%%%%%%%

\begin{document}

\title{Dynamical bounds for quasiperiodic Schr\"{o}dinger operators with rough potentials}
\author{Svetlana Jitomirskaya\thanks{Department of Mathematics,
    University of California, Irvine CA, 92717, szhitomi@uci.edu} and Rajinder Mavi\thanks{Department of Mathematics, MSU, East Lansing, MI, 48824}}
\date{}
\maketitle

\begin{abstract}
We establish localization type dynamical bounds as a corollary of
positive Lyapunov exponents for general operators with
one-frequency quasiperiodic potentials defined by piecewise H\"older
functions. This, in particular, extends some results previously known only
for trigonometric polynomials \cite{DT2007} to the case of
surprisingly low regularity. On the technical level, an important part
of the argument is an extension of uniform
uppersemicontinuity to cocycles with discontinuities, a result of
independent interest.
\end{abstract} 
\section{Introduction}
   We will study the quantum dynamical properties of  Schr\"odinger Hamiltonian
   acting on $\ell^2(\bbZ)$.
   \begin{equation}\label{schrodinger1}
    \ham_\theta u(n) = u({n-1}) + u({n+1}) + f(n\omega + \theta) u(n).
   \end{equation}
   where  $\theta \in \bbR$,  $\omega\in\bbR\backslash\bbQ$ and $f:\bbT\to\bbR,$ 
   $\bbT  =  \bbR/\bbZ,$ in the regime of positive Lyapunov exponents.
   The evolution of a wave packet  under the
   Hamiltonian (\ref{schrodinger1}) is given by the formula
   \[ u(t) = e^{-it\ham_\theta }u(0) \]
   Dynamical localization, i.e. the nonspread as $t\to\infty$ of $u(t)$ with
   initially localized  $u(0),$ is related to various quantities that
   can be measured in an experiment. It is often assumed by physicists
   to be a corollary of positivity of Lyapunov exponents, a quantity
   defined by dynamics of the associated cocycle and easily computable
   numerically. As mathematicians, we know however, that positive
   Lyapunov exponents, while implying no absolutely continuous spectrum, can coexist even with almost ballistic transport
   \cite{L96,djls} so one cannot expect dynamical localization in
   full generality, and for a more general result in the direction that
   physicists want, one has to tone down the notion of ``nonspread''
   accordingly. 

For a nonegative function $A(t)$ of time denote
\[  \left\langle A(t)\right\rangle_T=\frac{2}{T} \int_0^\infty e^{-2t/T} A(t) dt \]

 Let
   \[  a(n,t)  =   \frac12\left( \left|\left\langle e^{-it\ham_\theta}
       \delta_0,\delta_n\right\rangle\right|^2 
              +   \left|\left\langle e^{-it\ham_\theta}
       \delta_1,\delta_n\right\rangle\right|^2 
           \right) \]
and
\[  a_T(n)  = \left\langle  a(n,t)\right\rangle_T \]
   Clearly, $\sum_n a_T(n)=\sum_n a(n,t) =1$ for all $t$. 
   The classical quantities of interest are the moments of the
   position operator, that can be defined both with time averaging
    \[ \left\langle |X|_T^p\right\rangle =     \sum_n (1+|n|)^p a_T(n) .\]

or without
\[ \left\langle |X|^p(T)\right\rangle =     \sum_n (1+|n|)^p a(n,T) .\]

   For $p>0$ define the lower and upper transport exponents
   \[  \beta^+(p) =  \limsup_{t\to\infty}\frac{\ln\left\langle |X|^p(t)\right\rangle}{p\ln t};
        \hspace{.1in}  
        \beta^-(p) = \liminf_{t\to\infty}\frac{\ln\left\langle |X|_t^p\right\rangle}{p\ln t}. \]
determining the upper/lower power-law rate of growth of the moments
along subsequences. Note that for the purposes of this paper we define
the upper rate without time averaging, while the lower rate with time averaging. 

Dynamical localization is defined as boundedness in $T$ of
$\left\langle |X|^p(T)\right\rangle $. This implies pure point
spectrum,  
 thus for
parameters for which spectrum is singular continuous (known to be
generic in many situations with positive Lyapunov exponents) one cannot have
dynamical localization in this sense. Then, vanishing of $\beta^+,$
or, in absence of that, at least of $\beta^-$ are properties to look
for. Moreover, dynamical localization is a property that is often unstable with respect
to compact perturbations of the potential or phase shifts. In
contrast, vanishing of $\beta$ is always stable with respect to compact
perturbations \cite{DT2007} and also with respect to phase shifts 
in all known examples. It should be noted however that such vanishing cannot be expected
in general for operators (\ref{schrodinger1}) if the  Lyapunov
exponent is allowed to vanish even on a set of measure zero (as shown
by Sturmian potentials), or, in a slightly more general context even
if it vanishes at a single point \cite{jss}. 

The fact that $\beta(p)$ may depend nontrivially on $p$ is the
signature of intermittency, reflected on an even deeper level
in the fact that different parts of the wave packet may spread at
different rates. While any kind of upper bound discussed above
requires control of the entire wave packet, even control of the spread
of a portion of it is an interesting statement.

Set \[ P(N,t)  = \sum_{|n| \leq N} a(n,t),\,P_T(N) = \sum_{|n|\leq N}a_T(n). \]

A bound of the form 
 $P(T^a,T)>c$ shows that  at time $T$ a portion of the wave packet is
 confined in a box of size $~ T^a$.
 A bound $P_T(T^a) > c$ implies the corresponding statement
 for a weighted average over time. 
  Thus a bound like that holding for
 arbitrary $a>0$ can be considered as a signature of localization.
It is natural in this respect to introduce two other scaling
exponents:
 \[  \overline\xi = \lim_{\delta\to 0} \limsup_{T\to\infty}  \frac{
   \ln(  \inf\{ L |  P_T(L)  > \delta  \}  )  }{  \ln T  } \]
and
 \[  \underline\xi = \lim_{\delta\to 0} \liminf_{T\to\infty}  \frac{
   \ln(  \inf\{ L |  P_T(L)  > \delta  \}  )  }{  \ln T  } \]
Then vanishing of $\overline\xi$ or even $\underline\xi$ is again a
localization-type statement.

Various
   quantities have been used to quantify quantum dynamics, see
   \cite{BGT,DT2007} for a more comprehensive description. In this paper we
   focus on $\xi$ and $\beta$ only. Our main question is what kind of
   localization-type statements can be obtained from positivity of the
   Lyapunov exponents under very mild restrictions on regularity of
   the potential.

While the last decade has seen an explosion of general results for operators  (\ref{schrodinger1}) with analytic $f$ , see e.g. 
\cite{bbook,sim60} and references therein, and by now even the global
theory of such operators is well developed \cite{Aglobal}, there are
very few results beyond the analytic category that do not require
energy exclusion
\footnote{\label{1} It should be noted that exclusion of any energies in
  localization type results, such as, e.g. \cite{cgs}, make upgrading to dynamical statements
  very problematic, as even a single energy that does not carry any
  spectral measure can lead to robust transport \cite{jss,js}} (with few recent
exceptions \cite{wz1,wz2,JMv12} only confirming the rule). Indeed, not only the
methods of proof usually require
analyticity (or at least the Gevrey condition), but certain results
fail to hold as long as analyticity is relaxed \cite{wy} (see also
\cite{JM2}). It is expected that many recent ``analytic'' results
in fact do require analyticity. In this paper we show that, in
contrast to the above, dynamical upper bounds can be obtained as a
corollary of positive Lyapunov exponents under surprisingly weak
regularity.  

Namely, we allow $f$ in (\ref{schrodinger1})  with only H\"older
continuity, and even allow it to have finitely many {\it
  discontinuities} (so only require it to be locally
H\"older). Allowing for discontinuities in the class of considered potentials is
important for two reasons. First, the main
explicit non-analytic operators (\ref{schrodinger1}) that appear in
different contexts in physics literature \cite{p1,p2} have
$f$ with discontinuities.
Several models that are well studied mathematically: Maryland,
Fibonacci (or, more generally, Sturmian)\footnote{It should be
  mentioned that our analysis is not relevant to Fibonacci and most
  Sturmian models as for them the Lyapunov exponent vanishes on the
  spectrum} operators also belong to this class. Second, while there are
few results on positivity of Lyapunov exponents for non-analytic $f$, the Lyapunov exponents of operators
(\ref{schrodinger1}) with discontinuous $f$ are {\it always} positive at
  least a.e. \cite{dk}, providing us with a large collection of models
  for which our results are directly applicable. As far as we know,
  the present paper is the first one holding for a class of potentials that
  rough. Spectral localization for (continuous) H\"older potentials outside a set
  of energies of measure zero was established in \cite{cgs}, but there
  have been no dynamical bounds (see Footnote \ref{1}). In \cite{JMv12}
  we proved continuity of measure of the spectrum for (continuous)
  H\"older potentials.

We will say that $f$ is piecewise H\"older if $f$ has
a finite set of discontinuities, $J_f,$ and  there exists $\gamma>0$
such that $\|f\|_{PL_\gamma} <\infty$ where
   
   \[  \|f\|_{PL_\gamma} = \|f\|_\infty +
     \sup_{h>0} \sup_{t\in\bbT; dist(t,J_f) >|h|}\frac{|f(t+h) - f(t)|}{|h|^\gamma}. \]
 The functions $f$ with finite $\|\cdot\|_{PL_\gamma}$ norm form the
 space of piecewise $\gamma$-Lipschitz functions, that we denote $PL_\gamma(\bbT).$

We will now introduce the Lyapunov exponent.
 For a given $z\in\bbC$, a formal solution $u$ of 
\be \label{schrodinger2} \ham_\theta u = zu\ee
 with operator $\ham_\theta$ given by (\ref{schrodinger1})  
  can be reconstructed from its values at two consecutive points with the transfer matrix
   \begin{equation}\label{evop}
     A^{f,z}(\theta) = 
     \left(\begin{array}{cc}z - f(\theta) & -1\\ 1&0 \end{array} \right); \hspace{.5in} A^{f,z}: \bbt \to SL_2(\mathbb{C})
   \end{equation}
   via the equation
   \begin{equation}\label{dynsol}
      \left(\begin{array}{c}u(n+1)\\ u(n)\end{array}\right) =
     A^{f,z}(\theta+n\omega)      \left(\begin{array}{c}u(n)\\ u(n-1) \end{array} \right).
   \end{equation}
   Let us define the map
     $R:\bbt\to\bbt$ by $Rx:=x+\omega,$ 
   then the pair $(\omega, A^{f,z})$ viewed as a linear skew-product 
   $(x,v)\to(Rx,A^{f,z}(x)v),\;x\in\bbt,\;v\in\bbr^2,$ is called the corresponding Schr\"{o}dinger cocycle.
   The  iterations of the cocycle $(\omega,A^{f,z})$ for
   $k\geq 0$ are given by
   \begin{equation}\label{cocycle}
          A^{f,z}_{k}(\theta) = A^{f,z}(R^{(k-1)}\theta) \cdots A^{f,z}(R^{ 1}\theta) A^{f,z}(\theta),
      \hspace{.2in} A_0^{f,z} = I
   \end{equation} 
   and
      \begin{equation}\label{backcocycle}
      A^{f,z}_{k}(\theta) = \left(A^{f,z}_{-k}(R^{k+1}\theta)\right)^{-1};\hspace{.1in}k<0.
   \end{equation} 
   Therefore, it can be seen from (\ref{dynsol})
   that a solution to (\ref{schrodinger2}) for chosen initial conditions
   $(u(0),u(-1))$ for all $k\in \bbZ$ is given by,
   \begin{equation}
      \left(\begin{array}{c}u(k)\\ u(k-1)\end{array}\right) =  
     \ A_k^{f,z}(\theta)\left(\begin{array}{c}u(0)\\ u(-1)\end{array}\right).
   \end{equation}
   By the  general properties of subadditive ergodic cocycles, we can define the Lyapunov
   exponent
   \begin{equation}\label{avginf}
      \ly(z) = \lim_k\frac{1}{k}\int\ln \|A^{f,z}_k(\theta)\|\de\theta
      = \inf_k \frac{1}{k}\int\ln \|A_k^{f,z}(\theta)\|\de\theta,
   \end{equation}
   furthermore,
   $\ly(z) =\lim_k\frac{1}{k}\ln \|A^{f,z}_k(\theta)\|$
   for almost all $\theta\in\bbt$.
  
Finally, we introduce the Diophantine condition. Writing $\omega$ in
the continued fraction form
      \[\omega = \cfrac{1}{a_1+\cfrac{1}{a_2+\ddots}} \equiv [a_1,a_2,\ldots], \]
      the truncated continued fractions define the 
      approximants $\frac{p_n}{q_n} = [a_1,a_2,\ldots,a_n]$.
  We say that $\omega$ is  Diophantine if for some $\kappa>0$
     \begin{equation}\label{DC} q_{n+1} < q_n^{1+\kappa}\end{equation} for all large $n$.

Our first result is that  just  positivity of the Lyapunov exponent on a
positive measure subset of the spectrum already implies localization
bounds for the transport of the bulk of a wave packet.

Let $\mu_\theta$ be the spectral measure of $\ham_\theta$ and vector
$\delta_0$,
  by which we mean 
  $\langle (\ham_\theta - z)^{-1} \delta_0,\delta_0\rangle = \int_\bbR \frac{d\mu_\theta(x)}{x-z}  $
  for $z$ in the upper half plane and let
 $N:=\int \mu_\theta d\theta$ be the  integrated density of states measure.
   
   \begin{theorem}\label{partloc}For piecewise H\"older $f$ and
     $\omega\in\bbr\backslash\bbQ,$  suppose 
    $\ly(E)$ 
    of (\ref{schrodinger1})  is positive on a Borel subset $U$ with $N(U)>0$.
      Then 
\begin{enumerate}\item For any irrational $\omega$, $\underline\xi=0$ for a.e. $\theta$
\item If $\omega$ is Diophantine, then $\overline\xi=0$ for a.e. $\theta$
\item For all $\theta$ for $\zeta>0$, $P_{T_k}(T_k^{\zeta}) >  C\mu_\theta(U)$ for a 
      sequence $T_k\to\infty$; moreover if $\omega$ is Diophantine, 
      $P_T(T^{\zeta}) >  C\mu_\theta(U) $ for all large $T$. 
\end{enumerate}

   \end{theorem}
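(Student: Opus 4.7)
My overall strategy is to isolate the portion of the wave packet corresponding via the spectral theorem to the positive-Lyapunov part of the spectrum, $\chi_U(\ham_\theta)\delta_0$, and show that it remains confined to a box of size $T^\zeta$ up to exponentially small corrections. The complementary part contributes at most $1-\mu_\theta(U)$ of the mass of $a_T$, yielding the pointwise lower bound $P_T(T^\zeta)\ge C\mu_\theta(U)$ of part (3). Parts (1)--(2) follow from the definitions of $\underline\xi,\overline\xi$ by taking $\zeta\to 0$ after $\delta<\mu_\theta(U)/2$, using that $\mu_\theta(U)>0$ for a.e.\ $\theta$ (from ergodicity combined with $N(U)=\int\mu_\theta(U)\,d\theta>0$).

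The key analytic step is a quantitative large deviation theorem for the transfer matrices: on a compact energy set containing $U$ (where, after shrinking $U$ slightly, I may assume $\ly(E)\ge\gamma_0>0$ uniformly), I want
\[ \left|\tfrac{1}{n}\log\|A_n^{f,E}(\theta)\| - \ly(E)\right| < \epsilon_n \]
off a set of $\theta$ of small measure. For piecewise H\"older potentials this is the main technical input, since subharmonic-function methods of Bourgain--Goldstein type are unavailable. I would rely on the Kingman subadditive ergodic theorem, the local H\"older modulus, and a phase-exclusion argument that forces $\{R^k\theta\}_{0\le k<n}$ to remain $>\eta_n$ away from the discontinuity set $J_f$. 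A polynomial-in-$n$ measure bound on the exceptional set suffices for the Diophantine statements; for general irrational $\omega$ only qualitative LDT along the sequence $n=q_k$ of continued-fraction denominators is needed.

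Next I convert the LDT into finite-volume Green function decay. Via Cramer's rule, $G_{[-N,N]}(E+i\eta;0,n)$ is a ratio involving products of transfer matrices, so the LDT yields
\[ |G_{[-N,N]}(E+i\eta;0,n)|\le e^{-\ly(E)|n|/4}, \qquad |n|\le N, \]
for $E\in U$ outside a bad set of small $\mu_\theta$-mass (controlled via spectral averaging against $N=\int\mu_\theta\,d\theta$ and a Wegner-type bound from the density of states), provided $\eta\ge e^{-cN}$. The passage from finite to infinite volume is handled by a Combes--Thomas resolvent-identity argument.

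Finally I apply the Damanik--Tcheremchantsev identity
\[ \int_0^\infty e^{-2t/T}|\langle e^{-it\ham_\theta}\delta_m,\delta_n\rangle|^2\,dt \asymp T\int_\bbR|\langle(\ham_\theta-E-i/T)^{-1}\delta_m,\delta_n\rangle|^2\,dE, \]
restrict the $E$-integration to $U$, and set $N=T^\zeta$ with $\eta=1/T$ so the Green function bound applies. This controls the tail $\sum_{|n|>T^\zeta}$ of the $U$-contribution by $Ce^{-cT^\zeta}$, leaving at least $\sim \mu_\theta(U)$ weight inside $[-T^\zeta,T^\zeta]$. The main obstacle throughout is the first step: establishing the LDT for piecewise H\"older cocycles with enough quantitative control, which requires careful handling of the interplay between low regularity and discontinuities via phase exclusion --- the novel technical ingredient underlying the theorem.
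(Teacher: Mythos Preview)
Your reduction of parts (1)--(2) to part (3) via ergodicity is correct, but the core analytic strategy diverges from the paper and has a genuine gap. The step you flag as ``the main obstacle'' --- a quantitative LDT for piecewise H\"older cocycles with a polynomial-in-$n$ bound on the exceptional set --- is not delivered by Kingman plus a H\"older modulus plus phase exclusion: Kingman gives no rate, and no such LDT is known at this level of regularity (this is precisely why the analytic/Gevrey machinery of Bourgain--Goldstein does not apply here). The downstream steps compound the problem: a Wegner-type bound and spectral averaging against $N$ are not available for general piecewise H\"older $f$, so the passage from ``LDT off a small set of $\theta$'' to ``Green function decay for the specific $\theta$ under consideration, off a $\mu_\theta$-small set of $E$'' is unjustified.

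The paper avoids LDT entirely. Instead of Green function decay plus the Damanik--Tcheremchantsev Parseval identity, it invokes the Killip--Kiselev--Last criterion (Lemma~\ref{KKLthm}), which only requires that the truncated $\ell^2$ norm $\|A_\bullet(\theta,E)\|_{T^\zeta}$ be large --- and for this it suffices that $\max_{1\le j\le T^\zeta}\|A_j(\theta,E)\|$ be large for \emph{some} $j$ in the range, not all. To secure that single large $j$, the paper (i) extends Furman's uniform upper semicontinuity to almost-continuous cocycles, giving $\frac{1}{k}\ln\|A_k^{f,E}(\theta)\|<\ly(E)+\epsilon$ for \emph{all} $\theta$ and large $k$; (ii) combines this with $\ly(E)=\inf_k\int\frac{1}{k}\ln\|A_k^{f,E}\|$ to force the set $\{\theta:\|A_k^{f,E}(\theta)\|>e^{(1-\epsilon)k\ly}\}$ to have measure $\ge 1/2$; (iii) approximates $f$ by piecewise trigonometric polynomials $f_N$ via Fej\'er kernels, so that the corresponding level set is cut out by a piecewise polynomial of controlled degree and hence must contain an interval of length $\gtrsim 1/(kN)$; and (iv) uses the orbit density of the rotation (Lemma~\ref{JitoLemma}) to land some $R^x\theta$, $0\le x<q_n+q_{n-1}$, in that interval. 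This yields Proposition~\ref{polygrow} and then (\ref{pc1}) via KKL. The polynomial-approximation step (iii) is the novel ingredient that replaces the LDT you are trying to prove.
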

\begin{remark}  Positivity of the Lyapunov exponent on a positive
  IDS measure subset is clearly essential, as the result does not
   hold for Fibonacci-type models where Lyapunov exponent is positive
   a.e. but zero on the spectrum.

  The Diophantine condition is  essential for vanishing of
  $\overline\xi$ as the result does not hold for Liouville $\omega$
  (\cite{jz})

\end{remark}
\begin{remark} The full measure sets of $\theta$ in
   cases 1 and 2 of Theorem \ref{partloc}  are specified by the set 
  $\{\theta : \mu_\theta (U) + \mu_{R\theta}(U)>0\}$. It is not entirely clear
  whether there are quasiperiodic examples with $N(E: \ly(E)>0)>0$ and
  $\mu_\theta (E: \ly(E)>0)=0$ for some $\theta$.
\end{remark}
\begin{remark}\label{rem}
It is an interesting question whether or not a.e. vanishing of
$\underline\xi$ is a general corollary of positive Lyapunov exponents,
so holds  for all ergodic potentials. This may be reminiscent of the property of
zero Hausdorff dimension of spectral measures of operators with positive Lyapunov exponents, which was
originally proved for quasiperiodic operators with trigonometric
polynomial potentials \cite{JL00}, but then turned out to be a general
fact, easily extractable from some deep results of potential theory \cite{Si07}.
\end{remark}
\begin{corollary}
Assume $f$ is locally H\"older, has at least one point of
discontinuity and that $N$ has an absolutely continuous component. Then the conclusions of Theorem \ref{partloc} hold.
\end{corollary}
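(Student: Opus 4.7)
The plan is to verify the hypotheses of Theorem \ref{partloc}; its conclusions then transfer verbatim. Two things need to be checked: that $f \in PL_\gamma(\bbT)$ for some $\gamma>0$, and that the set $U := \{E : \ly(E)>0\}$ satisfies $N(U)>0$.

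The first is essentially immediate from the hypotheses. Local H\"older continuity away from the (finite, nonempty) discontinuity set $J_f$, combined with boundedness --- implicit in ``locally H\"older'' on the compact torus $\bbT$ with finitely many jumps, which necessarily have finite one-sided limits --- gives $\|f\|_{PL_\gamma}<\infty$ for the common H\"older exponent across continuity intervals. Thus $f$ meets the definition of piecewise H\"older used by Theorem \ref{partloc}.

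The substantive step is the second. I would invoke the Damanik--Killip theorem \cite{dk}, already highlighted in the introduction, which asserts that whenever $f$ in (\ref{schrodinger1}) has at least one point of discontinuity, $\ly(E)>0$ for Lebesgue-almost every $E\in\bbR$. Consequently $U$ is a Borel set whose complement $\bbR \setminus U$ has Lebesgue measure zero. Because $N$ has, by hypothesis, a nontrivial absolutely continuous component $N_{ac}$, and $N_{ac}$ is absolutely continuous with respect to Lebesgue measure, we obtain $N_{ac}(\bbR \setminus U)=0$, and therefore
\be
N(U) \;\geq\; N_{ac}(U) \;=\; N_{ac}(\bbR) \;>\; 0.
\ee

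With such a $U$ in hand, Theorem \ref{partloc} applies directly and yields the three asserted dynamical bounds (under the respective hypotheses on $\omega$ for parts 2 and 3). There is no genuine obstacle: the corollary amounts to a packaging of Theorem \ref{partloc} together with the ``discontinuity $\Rightarrow$ a.e.\ positive Lyapunov exponent'' theorem of \cite{dk}, once one observes that a nonzero absolutely continuous piece of $N$ suffices to upgrade ``Lebesgue-a.e.'' to ``positive $N$-measure.''
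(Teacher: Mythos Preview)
Your proof is correct and follows essentially the same approach as the paper: both reduce to Theorem \ref{partloc} by invoking the Damanik--Killip result \cite{dk} that discontinuous potentials have Lebesgue-a.e.\ positive Lyapunov exponent, then use the absolutely continuous component of $N$ to conclude $N(U)>0$. The paper's proof is a one-line remark leaving the absolutely-continuous-component step implicit, while you spell it out explicitly.
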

\begin{proof} Follows immediately from a.e. positivity of Lyapunov
  exponents of potentials with discontinuities (was proved in \cite{dk}
  with a conjecture made in \cite{jm}).
\end{proof}

   \begin{remark}
 A large class of examples of operators(\ref{schrodinger1}) with discontinuous
$f$ and absolutely continuous $N$ is presented in \cite{jk}
\end{remark}

Other potentials have been shown to satisfy the conditions of Theorem
\ref{partloc} in various regimes in
\cite{bjerklov,C08,K03,zhenghe}. In all those cases Theorem \ref{partloc}
improves on some of the known results since, even if the results
potentially allowed
for dynamical extensions, unlike Theorem \ref{partloc} spectral
localization cannot hold
for all $\theta$ at least for continuous $f$ that are even on the
hull \cite{js}.

Certainly, not every $f$ in  (\ref{schrodinger1})  corresponds to a
model relevant to physics, and since our main question is physically
motivated, it is natural to impose assumptions that are necessary for
physics relevance. In particular, Lyapunov exponent should be
continuous in various parameters for operators coming from physics
(although such continuity does not hold universally for operators
(\ref{schrodinger1}) even for $f$ in $C^\infty$ \cite{wy}). Our next
result has this as an assumption.

   \begin{theorem}\label{totalloc}
     For piecewise H\"older $f$ and
     $\omega\in\bbR\backslash\bbQ$  suppose $\ly$ is continuous in $E$
     and $\ly(E) > 0$ for every $E\in\bbR$. 
     Then 
\begin{enumerate}\item $\beta^-_{(\omega,\theta)}(p) = 0$ for all $\theta\in \bbT$, $p>0$;
      \item if $\omega$ is  Diophantine, 
      then $\beta^+_{(\omega,\theta)}(p) = 0$ for all $\theta \in \bbT$, $p>0$.\end{enumerate}
   \end{theorem}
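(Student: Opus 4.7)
The plan is to show that the entire wave packet is confined to a subpolynomial spatial scale, and then to combine this with a ballistic light-cone bound via the standard moment decomposition. Continuity of $\ly$ together with compactness of the spectrum $\Sigma = \sigma(\ham_\theta)$ and positivity on all of $\bbR$ yields a uniform lower bound $\ly(E) \geq \gamma_0 > 0$ on a neighborhood of $\Sigma$; since $\mu_\theta$ is supported on $\Sigma$, spectral integrals are controlled by this uniform lower bound.

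The central task is to sharpen part 3 of Theorem \ref{partloc} from the fixed-fraction confinement $P_T(T^\zeta) > C\mu_\theta(U)$ to a super-polynomial confinement
\[
P_{T_k}(T_k^\zeta) \geq 1 - C_m T_k^{-m} \qquad \text{for every } m>0,
\]
along a sequence $T_k \to \infty$, and for all large $T$ in the Diophantine case. The sharpening should be available because under uniform positivity of $\ly$ on $\Sigma$, the ``bad'' set of $(\theta,E)$ on which $\|A_n^{f,E}(\theta)\| < e^{\gamma_0 n/2}$ has Lebesgue measure decaying faster than any inverse polynomial in $n$, rather than merely having vanishing measure as in the qualitative setting of Theorem \ref{partloc}. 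Re-running the argument behind Theorem \ref{partloc} with this stronger large-deviation input should yield the quantitative confinement.

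Granted the refined confinement, the conclusion is routine. Split
\[
\langle |X|^p_T\rangle = \sum_{|n|\leq T^\zeta}(1+|n|)^p a_T(n) + \sum_{|n|>T^\zeta}(1+|n|)^p a_T(n).
\]
The first sum is bounded by $T^{p\zeta}$. For the second, a Combes-Thomas / Lieb-Robinson bound gives $a(n,t) \leq C e^{-c|n|}$ for $|n|\geq C't$, so after the Laplace weight $e^{-2t/T}$ the contribution from $|n|>T^{1+\delta}$ is super-polynomially small in $T$; the remaining range contributes at most $T^{p(1+\delta)}(1-P_T(T^\zeta)) \leq C_m T^{p(1+\delta)-m}$, which is absorbed into $T^{p\zeta}$ once $m > p(1+\delta)$. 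Hence $\langle |X|^p_T\rangle \leq 2T^{p\zeta}$, giving $\beta^-(p)\leq \zeta$, and $\zeta\downarrow 0$ yields $\beta^-(p)=0$. Part 2 is then deduced by the same splitting applied to $\langle |X|^p(T)\rangle$: the Diophantine condition both upgrades the $P_T$-bound to hold for all large $T$ and enables the passage to the non-time-averaged object $a(n,T)$, using $q_{n+1}\leq q_n^{1+\kappa}$ to ensure that the subsequential scales $T_k$ at which confinement is controlled are separated by at most a polynomial factor, a loss easily absorbed into the super-polynomial gain.

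The main technical obstacle is the super-polynomial large-deviation estimate for piecewise H\"older cocycles that underlies the refined $P_T(T^\zeta)$ bound. For analytic $f$ this would follow from subharmonic function theory; here the proof must be conducted purely dynamically, telescoping products of $A^{f,E}$ along the orbit $\{\theta+n\omega\}$ while separately handling phases that fall close to the discontinuity set $J_f$ and exploiting the $PL_\gamma$ modulus of continuity to control the remaining H\"older contributions. A secondary obstacle is synchronizing continued-fraction scales $q_n$ with transport scales $T^\zeta$ in the Diophantine refinement so that the pointwise version $\langle |X|^p(T)\rangle$ inherits the tail bounds established for its time-averaged counterpart.
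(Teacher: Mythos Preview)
Your proposal diverges from the paper's argument and contains a genuine gap at its central step.

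The step you flag as ``the main technical obstacle'' --- a super-polynomial large-deviation estimate for $\{(\theta,E):\|A_n^{f,E}(\theta)\|<e^{\gamma_0 n/2}\}$ in the piecewise H\"older class --- is not a technicality but the whole difficulty. Such bounds are available for analytic (or Gevrey) $f$ via subharmonicity, and the paper's point is precisely that nothing of the sort is known, or expected, for $f\in PL_\gamma(\bbT)$. Your plan to ``re-run the argument behind Theorem~\ref{partloc}'' cannot yield $P_T(T^\zeta)\geq 1-C_mT^{-m}$ anyway: that argument rests on Lemma~\ref{KKLthm}, which is a \emph{lower} bound $P_T(L)\geq C\mu_\theta(\text{good set})$, and gives no control on the complement; to upgrade it you would need $\mu_\theta(\text{bad set})\leq C_mT^{-m}$, which is exactly the missing large-deviation input. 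So the proposal is circular at its core.

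What the paper actually does is bypass large deviations entirely. It invokes the Damanik--Tcheremchantsev criterion (Lemma~\ref{DTtheorem}), which reduces $\beta^\pm(p)=0$ to the integral bound~(\ref{dtgrowth}). That criterion requires only a \emph{pointwise} lower bound on $\max_{1\leq j\leq T^\zeta}\|A_j^{f,z}(\theta)\|$, valid for \emph{every} $\theta$ and every $E$ in a compact neighborhood of the spectrum --- not a measure estimate on a bad set. Proposition~\ref{polygrow} supplies exactly this: when $\ly$ is continuous and positive on a compact $U$, the threshold $T_E$ (resp.\ $n_E$) is uniform in $E$, so $\Phi_{\zeta,\delta}(E,T)>c$ for all $E$ and all large $T$ (resp.\ along $T_k$), and (\ref{dtgrowth}) follows by bounding the integrand pointwise by $cT^{-\delta}$. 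The mechanism producing the pointwise bound (Lemma~\ref{SOlemma}) is Fej\'er approximation of $f$ by piecewise trigonometric polynomials, a counting argument showing $V_k(b\ly,A^{f_N,E})$ contains an interval of controlled length, and then hitting that interval with the orbit via Lemma~\ref{JitoLemma}. None of this requires, or produces, a large-deviation estimate.
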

\begin{remark}
It is an interesting question whether or not vanishing of
$\beta^-$ is a general corollary of uniformly positive Lyapunov exponents in the
regime of their continuity,
so for all ergodic potentials. The analogy of Remark \ref{rem} may
also apply. 
\end{remark}
\begin{remark} The Diophantine condition is essential for vanishing of
  $\beta^+$ \cite{jz}.
\end{remark}
 It is sometimes useful to consider (\ref{schrodinger1}) with a scalable potential $f$. Let us introduce,
   \begin{equation}\label{schrodingerlambda}
    \ham_{\theta,\lambda} u(n) = u({n-1}) + u({n+1}) + \lambda  f(n\omega + \theta) u(n).
   \end{equation}
 for a parameter $\lambda > 0$.
\begin{corollary}
If $f$ is $C^2$ with exactly two nondegenerate extrema, and $\omega$
is  Diophantine, then there is some $\lambda(f,\omega)> 0$ so that, for $\lambda>\lambda(f,\omega)$,
$\beta^+_{(\omega,\theta)}(p) = 0$ for all $\theta\in \bbT$, $p>0$.
\end{corollary}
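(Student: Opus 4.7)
The plan is to obtain the corollary as a direct application of Theorem \ref{totalloc} (part 2) to the operator $\ham_\theta$ with potential $\lambda f$, so the task reduces to verifying the four hypotheses of that theorem: (i) piecewise Hölder regularity, (ii) the Diophantine condition on $\omega$, (iii) strict positivity of $\ly(E)$ for every $E \in \bbR$, and (iv) continuity of $\ly$ in $E$.

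Conditions (i) and (ii) are essentially free. Since $f$ is $C^2$ on the compact torus $\bbT$, it is globally Lipschitz and has no discontinuities, so $\lambda f \in PL_\gamma(\bbT)$ for every $\gamma \in (0,1]$, with $J_{\lambda f} = \emptyset$. The Diophantine assumption is part of the hypothesis, so (ii) holds verbatim.

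The substance lies in (iii) and (iv). First I would invoke the established large-coupling positivity result for $C^2$ potentials with exactly two nondegenerate extrema: there is a threshold $\lambda(f,\omega)$ such that for $\lambda > \lambda(f,\omega)$ one has a uniform lower bound $\ly(\lambda,E) \ge c(\lambda) > 0$ valid for all $E \in \bbR$. This is the type of statement obtained by the inductive multiscale arguments for Morse-type potentials (e.g.\ the framework of \cite{bjerklov}), where the nondegeneracy of the extrema controls the measure of the set of phases where the cocycle is not yet uniformly hyperbolic at a given scale. For (iv), I would extract continuity from the quantitative large-deviation estimates that these inductive arguments produce: once one has exponential estimates for $\|A_k^{\lambda f,E}\|$ valid off a set of $\theta$ of super-polynomially small measure, uniformly on compact sets of $E$, the Avalanche-Principle/telescoping argument (in the spirit of Goldstein--Schlag) delivers continuity of $\ly(E)$.

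The main obstacle is (iv): in contrast to the analytic category, continuity of the Lyapunov exponent is known to fail for general $C^\infty$ potentials \cite{wy}, so one cannot just quote a universal regularity theorem. The point I would emphasize is that the Morse hypothesis together with $\lambda \gg 1$ replaces the analyticity input by a quantitative transversality condition, which is exactly what makes the Avalanche-Principle scheme go through in this nonanalytic setting. Once (iii) and (iv) are in hand for $\lambda > \lambda(f,\omega)$, Theorem \ref{totalloc}(2) yields $\beta^+_{(\omega,\theta)}(p) = 0$ for all $\theta \in \bbT$ and all $p > 0$, completing the proof.
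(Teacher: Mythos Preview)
Your reduction to Theorem \ref{totalloc}(2) is exactly what the paper does, and your identification of continuity of $\ly$ as the nontrivial hypothesis is on point. The only difference is in the sourcing of (iii) and (iv): the paper dispatches both in one stroke by citing \cite{wz1}, where positivity \emph{and} continuity of the Lyapunov exponent are established for precisely this class ($C^2$ potentials with exactly two nondegenerate extrema, Diophantine $\omega$, large coupling). You instead invoke \cite{bjerklov} for positivity and then sketch a separate LDT/Avalanche-Principle route to continuity; this is morally the right mechanism (and is indeed close to what \cite{wz1} carries out), but it is more work than needed here and \cite{bjerklov} alone does not quite deliver the continuity statement. Citing \cite{wz1} directly gives a one-line proof.
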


Note that this is the first dynamical bound for $C^2$ potentials.

\begin{Proof} Follows directly from Theorem \ref{totalloc} and the
  results of \cite{wz1}.
\end{Proof}

\begin{corollary}
If $f$ is analytic, then, there is $\lambda(f) >0$ so that for $\lambda>\lambda(f)$ both conclusions of Theorem \ref{totalloc}
hold.
\end{corollary}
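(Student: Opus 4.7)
The plan is to deduce the corollary as a direct application of Theorem \ref{totalloc}, reducing it to verifying the three hypotheses of that theorem for the potential $\lambda f$ with $f$ analytic and $\lambda$ large. The hypothesis of piecewise H\"older regularity is immediate: an analytic $f:\bbT\to\bbR$ is $C^\infty$, hence Lipschitz (so in $PL_1(\bbT)$) with empty discontinuity set $J_f=\emptyset$. The two nontrivial hypotheses are positivity and continuity of the Lyapunov exponent $\ly(E)$.

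For positivity, I would invoke Herman's subharmonicity bound. For analytic $f$ there is an explicit constant $C(f)\ge 0$ such that $\ly(E)\ge \ln \lambda - C(f)$ uniformly in $E\in\bbR$ and $\omega\in\bbR\setminus\bbQ$. Consequently, choosing $\lambda > \lambda_0(f):= e^{C(f)}$ gives $\ly(E)>0$ for every real $E$, which is the pointwise positivity required by Theorem \ref{totalloc}. (One may absorb this $\lambda_0$ into the $\lambda(f)$ appearing in the corollary statement.)

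For continuity of $\ly$ in $E$, I would cite the now-standard Bourgain--Jitomirskaya / Goldstein--Schlag results for analytic one-frequency cocycles: for analytic $f$ the map $E\mapsto \ly(E)$ is continuous on all of $\bbR$ (in fact jointly continuous in $E$ and $\omega$). Since positivity of $\ly$ on the entire line is already established in the previous step, continuity holds without any positive-LE caveat being an obstacle.

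Having verified the three hypotheses, Theorem \ref{totalloc} applies verbatim, yielding $\beta^-_{(\omega,\theta)}(p)=0$ for every irrational $\omega$ and every $\theta\in\bbT$, and $\beta^+_{(\omega,\theta)}(p)=0$ for every Diophantine $\omega$ and every $\theta\in\bbT$, which is precisely both conclusions of Theorem \ref{totalloc}. There is no genuine obstacle: the work is entirely in the two black-box ingredients (Herman's bound and continuity of the Lyapunov exponent), both of which are classical in the analytic category.
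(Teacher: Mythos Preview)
Your proposal is correct and follows essentially the same approach as the paper, which simply cites positivity \cite{SS91} and continuity \cite{BJ2001} of the Lyapunov exponent for analytic $f$ and then applies Theorem~\ref{totalloc}. One minor attribution point: the uniform lower bound $\ly(E)\ge\ln\lambda-C(f)$ for \emph{general} analytic $f$ is due to Sorets--Spencer rather than Herman (whose subharmonicity argument covers trigonometric polynomials), but this does not affect the validity of your argument.
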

\begin{proof} follows from non-perturbative positivity  \cite{SS91} and
  continuity \cite{BJ2001} of the Lyapunov
  exponent for analytic $f.$
\end{proof} 
\begin{remark} The conclusions of Theorem (\ref{totalloc})
 were established in a combination of \cite{DT2007}
  and \cite{DT08} for trigonometric
  polynomial $f.$ The result of \cite{DT2007} allows for a 
  weaker Diophantine condition than ours. Namely it
  holds for $\omega$  such that \begin{equation}\label{wbc}
        \lim_{n\to\infty} \frac{\ln q_{n+1}}{q_n} = 0.
     \end{equation}
Our current proof does not automatically extend to this condition
because of the need to tackle low regularity. A simple modification of the
proof allows to obtain this result for $\omega$
satisfying (\ref{wbc}) and analytic $f$ but not  $f \in C^\gamma.$

\end{remark}

Recall that a function $f: \bbT \to \bbR$ is Gevrey if it is $C^\infty$ and there is some $s \leq 1$ so that
       there exist $M,K <\infty$ so that for all $m\geq 1$
\[                \sup_x |\partial^m f(x)|  < M K^m (m!)^{s}.   \]
We say a Gevrey function $f$ satisfies the tranversality condition if  for all  $x\in \bbT$ there exists an $m \geq 1 $ 
  so that $\partial^m f(x) \neq 0$.
\begin{corollary}
If $f$ is Gevrey with a transversality condition and $\omega$ is Diophantine, then there is some $\lambda(f,\omega) > 0$ so that for $\lambda>\lambda(f,\omega),$ $\beta^+_{(\omega,\theta)}(p) = 0$ for all $\theta \in \bbT$, $p>0$.
\end{corollary}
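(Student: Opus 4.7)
The corollary is essentially an immediate application of the second conclusion of Theorem \ref{totalloc}, so my plan is simply to verify its three hypotheses in the Gevrey-with-transversality setting: (i) $f$ is piecewise H\"older, (ii) $\ly$ is continuous in $E$, and (iii) $\ly(E)>0$ for every $E\in\bbR$.

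For (i), I would observe that a Gevrey function on $\bbT$ is in particular $C^\infty$, hence globally $\gamma$-H\"older for every $\gamma\in(0,1]$, with no discontinuities, so $f\in PL_\gamma(\bbT)$ trivially. For (iii), I would invoke Klein's theorem for Gevrey potentials with a transversality condition, which gives, for all $\lambda>\lambda(f,\omega)$ and a Diophantine $\omega$, the uniform lower bound $\ly(E)\geq \tfrac12\log\lambda - o_\lambda(1)$ uniformly on $\bbR$ (in particular, $\ly(E)>0$ for every $E$). For (ii), continuity of $\ly$ in $E$ in the Gevrey-with-transversality regime follows from the accompanying large deviation estimates and the standard Avalanche-Principle/Goldstein--Schlag argument extended to the Gevrey setting, as developed in Klein's work and subsequent refinements.

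Once (i)--(iii) are verified, the second clause of Theorem \ref{totalloc} applies verbatim, yielding $\beta^+_{(\omega,\theta)}(p)=0$ for every $\theta\in\bbT$ and every $p>0$, which is precisely the claim.

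There is essentially no obstacle in the argument itself, since the deep work has already been done in the Gevrey-with-transversality literature; the only care needed is to make sure the cited positivity and continuity results apply under the same Diophantine condition as used in Theorem \ref{totalloc}, which is true in the standard formulations. Thus the proof reduces to a one-line citation combining the two ingredients with Theorem \ref{totalloc}.
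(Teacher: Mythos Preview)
Your proposal is correct and follows essentially the same approach as the paper: the paper's proof is the single line ``Follows from Theorem \ref{totalloc} and the results of \cite{K03},'' which is exactly your plan of verifying the hypotheses (piecewise H\"older, positivity, and continuity of $\ly$) via Klein's Gevrey results and then invoking the Diophantine clause of Theorem \ref{totalloc}. Your write-up is simply a more explicit unpacking of that one-line citation.
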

\begin{proof} Follows from Theorem \ref{totalloc} and the results of
  \cite{K03}
\end{proof}

\begin{remark} $\lambda(f,\omega)$ depends on $\omega$ through its
  Diophantine class.  In \cite{K03} Anderson localization is established for
  all $\theta$ and a.e. $\omega$ in this class (depending on
  $\theta.$)
\end{remark}

Another immediate corollary can be obtained for a class of discontinuous $f$
monotone on the period as considered in \cite{jk}.
 Then,  Anderson localization is established in \cite{jk} for Diophantine $\omega$,
 while continuity (and positivity for large $\lambda$) of the Lyapunov
exponent is established for all $\omega$.  
Theorem \ref{totalloc}
immediately implies in this case vanishing of $\beta^-$ for
 $\lambda$ as above and all
$\theta$ and all $\omega$.

For the proof of Theorem \ref{partloc} we use a criterion from
\cite{KKL}, and to prove Theorem \ref{totalloc} apply the results of
\cite{DT08}. This is done in Section \ref{KL}. To apply those results we need to establish certain lower
bounds on transfer matrices. To obtain this we build on the
technique we introduced in \cite{JMv12}. 
On the technical side, 
our main achievement is
in  both extending the method of \cite{JMv12} to allow discontinuities
and in establishing the underlying
uniform upper bound for uniquely ergodic dynamics to the case of
cocycles with zero measure set of discontinuities. The latter is a
general result that is of independent interest and of the type that
has been crucial in various proofs of localization/regularity in
many recent articles. Our extension has already been
used in \cite{jk} for their spectral localization theorem and in
\cite{jz} for their dimensional analysis of Sturmian potentials.

\section{Key lemmas}   \label{KL}

Both \cite{KKL} and \cite{DT08} (see lemma's \ref{KKLthm} and \ref{DTtheorem}) reduce the dynamical bounds to lower
estimates on the norms of the transfer matrices over controlled
scales. Essentially, one needs to establish polynomial growth of any
order at scales uniform in energy. Positivity of the Lyapunov exponents per se implies such lower
bounds, so it remains to establish uniform control over the scales. This is
done in Proposition \ref{polygrow}.

   For any $\delta \geq 1$ and $1\geq \zeta > 0$ we define, for  $E\in \bbC$ and $T >0$
   \[  \Phi_{\zeta,\delta}(E,T) = \inf
                \left(
                  \min\left\{\frac{\max_{1\leq  j\leq 
		     T^\zeta}\|A_{j}(\theta,z)\|^2}{  T^{\delta} }, \frac{\max_{1\leq  j\leq 
		     T^\zeta}\|A_{- j}(\theta,z)\|^2} { T^\delta}\right\} 
                     \right)  \]
   where the infimum is over all $|z - E| \leq T^{-\zeta} $ and $ \theta\in \bbT$. 
We will establish

  \begin{proposition}\label{polygrow}
      Suppose $f\in PL_\gamma(\bbT),\chi>0$, and suppose $\ly(E) >
      \chi $  for all $E$ in a Borel set 
  $U\subset \bbR.$  
      Then, for any $\delta \geq 1$ and  $1\geq\zeta > 0$ we have,
      there is some $c>0$ and sequence $(T_n)$ so that for every $E \in U$,
      there is some $n_E$ so that, for $n > n_E$
      \begin{equation}\label{maxcocycle1}
         \Phi_{\zeta,\delta}(E,T_n) > c.
      \end{equation}  
     If $\omega$ is Diophantine,  for each $E\in U$ there is $T_E < \infty$ so that for $T > T_E$,
           \begin{equation}\label{maxcocycle2}
         \Phi_{\zeta,\delta}(E,T) > c.
      \end{equation} 
      Finally, if $U$ is compact and $\ly$ is continuous then for all
       rotations $\omega$ there is some $n_0 < \infty$ so that for all $E\in U$ we can set $n_E = n_0$ in
       (\ref{maxcocycle1}),
       and if $\omega $ is Diophantine there is some $T_0 < \infty$ so that for all $E\in U$ 
       we can set $T_E = T_0$ in (\ref{maxcocycle2}).
   \end{proposition}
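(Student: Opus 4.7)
The plan is to produce, for each $E \in U$ and each scale $N$, an anchor $\theta^\star = \theta^\star(E,N) \in \bbT$ at which $\|A_N^{f,E}(\theta^\star)\| \ge e^{N\chi}$, and then propagate this exponential lower bound to every phase $\theta\in\bbT$ and every energy $z$ in the $T^{-\zeta}$-ball around $E$, losing only polynomial factors in $T = N^{1/\zeta}$. The existence of $\theta^\star$ comes for free from the identity $\ly(E)=\inf_N N^{-1}\int\ln\|A_N^{f,E}(\theta)\|\,d\theta$, which forces $\int \ln\|A_N^{f,E}(\theta)\|\,d\theta \ge N\chi$ and hence yields some $\theta^\star$ by the mean-value property. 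Combining this with the cocycle identity $A_N(\theta^\star) = A_{N-s}(\theta^\star+s\omega)A_s(\theta^\star)$ and the crude bound $\|A_s\|\le C^s$ (with $C$ depending only on $\|f\|_\infty$ and $|E|$), one obtains a whole family of secondary anchors $\theta_s^\star := \theta^\star + s\omega$, $0 \le s \le S := \lfloor N\chi/(2\ln C)\rfloor$, at which $\|A_{N-s}^{f,E}(\theta_s^\star)\| \ge e^{N\chi/2}$. By minimality of irrational rotation these $S+1$ anchors are $\varepsilon(S,\omega)$-dense in $\bbT$; a suitable subsequence $N=N_n$ keyed to the continued-fraction denominators of $\omega$ makes $\varepsilon \to 0$, while the Diophantine hypothesis yields the quantitative density $\varepsilon \lesssim S^{-1/(1+\kappa)}$, which promotes ``along a subsequence'' to ``for all large $T$''.

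Next I would transfer the anchor bound to an arbitrary $\theta$ by exploiting the piecewise H\"older regularity of $f$. A telescoping computation gives the cocycle H\"older estimate
\[
  \|A_n^{f,E}(\theta)-A_n^{f,E}(\theta_s^\star)\| \le n\,C^{n-1}\|f\|_{PL_\gamma}\,|\theta-\theta_s^\star|^\gamma,
\]
valid provided no iterate $\theta + j\omega$ ($0\le j < n$) lies on the opposite side of a point of $J_f$ from the corresponding iterate of $\theta_s^\star$. When it applies, this estimate combines with the exponential anchor bound to give $\|A_{N-s}^{f,E}(\theta)\| \ge \tfrac12 e^{N\chi/2}$, which dwarfs the polynomial target $T^{\delta/2}$. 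The same idea handles the $z$-perturbation: since $A^{f,z}-A^{f,E}$ differs only in one entry by $|z-E|$, one has $\|A_n^{f,z}(\theta)-A_n^{f,E}(\theta)\| \le n\,C^{n-1}|z-E|$, which is absorbed by the exponential gain as soon as $|z-E|\le T^{-\zeta}$ and $n\le T^\zeta$. The backward part of $\Phi_{\zeta,\delta}$ is treated analogously via $A_{-k}(\theta) = A_k(\theta-k\omega)^{-1}$, which preserves norms in $\slr$.

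The hard part is handling phases whose orbit straddles $J_f$: for such $\theta$ the nearest anchor $\theta_s^\star$ may fail the H\"older transfer because at some step $j$ the values $f(\theta + j\omega)$ and $f(\theta_s^\star + j\omega)$ jump across a discontinuity of $f$, destroying the estimate at a single factor and potentially at every subsequent factor too. This is precisely the obstacle addressed by the extension of the \cite{JMv12} method promised in the introduction. The idea is that we have $S+1=O(N)$ competing anchors but only $|J_f|$ discontinuities, so by a pigeonhole/sliding-window argument (together with a quantitative control of how close the orbit of $\theta$ comes to $J_f$ within $N$ steps) one can select for each $\theta$ some $s$ for which the H\"older transfer is valid throughout the entire length-$(N-s)$ cocycle; this is also where the uniform upper bound for uniquely ergodic cocycles with a zero-measure discontinuity set (the other main technical ingredient announced in the introduction) is used to rule out pathological configurations. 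Finally, the uniformity of $n_E$ (resp.\ $T_E$) when $U$ is compact and $\ly$ is continuous follows by a standard compactness argument: cover $U$ by finitely many balls on which $\ly \ge \chi$ and all the constants $C$, $S/N$, $\varepsilon$, $\|f\|_{PL_\gamma}$ vary by at most a bounded factor, and take the maximum of the pointwise thresholds $n_E$ (resp.\ $T_E$) over this finite cover.
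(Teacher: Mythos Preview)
Your H\"older-transfer step does not close. You take cocycle length $n\sim N=T^{\zeta}$ and anchor spacing $|\theta-\theta_{s}^{\star}|\sim 1/S\sim 1/N$; the telescoping estimate then reads
\[
   n\,C^{\,n-1}\,\|f\|_{PL_\gamma}\,|\theta-\theta_{s}^{\star}|^{\gamma}\ \sim\ N\,C^{N-1}\,N^{-\gamma},
\]
which is exponentially large in $N$ and cannot be absorbed by the anchor bound $e^{N\chi/2}$. The $z$-perturbation step has the same defect: with $|z-E|\le T^{-\zeta}=N^{-1}$ and $n\le N$ you get $nC^{n-1}|z-E|\sim C^{N}$. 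Replacing the crude $C^{n}$ by the sharp uniform upper bound $e^{n(\ly(E)+\epsilon)}$ (which is what the almost-continuous Furman theorem of Section~\ref{UU} supplies) does not save the argument either, since $\ly(E)\ge\chi>\chi/2$ and the net exponent $\ly(E)+\epsilon-\chi/2$ remains positive. The obstruction is structural: the number of secondary anchors you can manufacture from a single primary one via the cocycle identity is $O(N)$, bounded by the cocycle length, whereas H\"older transfer across a length-$n$ cocycle requires the two starting phases to be $e^{-cn}$-close. These demands are incompatible, and your pigeonhole argument for the discontinuities does nothing to address this scale mismatch.

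The paper resolves this by decoupling the two scales. It approximates $f$ by a piecewise trigonometric polynomial $f_{N}$ via Fej\'er summation, so that $\theta\mapsto\|A_{k}^{f_{N},E}(\theta)\|^{2}$ is a piecewise polynomial of degree $O(kN)$ on $k|J_f|$ pieces. The uniform upper bound (this is where Corollary~\ref{pwfurmana} is actually used: to show the good set $V_{k}=\{\theta:\|A_{k}^{f_{N},E}(\theta)\|\ge e^{bk\ly(E)}\}$ has measure at least $\tfrac12$, not to ``rule out pathological configurations'') then forces $V_{k}$ to contain an \emph{interval} of length $\gtrsim (k^{2}N|J_f|)^{-1}\ge e^{-k\tau\ly(E)/\gamma}$. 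One now takes an orbit of length $q_{n}+q_{n-1}$ with $q_{n}\ge e^{k\tau\ly(E)/\gamma}$, which lands \emph{inside} this interval; note $q_{n}$ is exponentially large compared to the cocycle length $k$, and $k$ is chosen only logarithmic in $T$. No H\"older transfer between nearby phases is ever performed: having $R^{x}\theta\in V_{k}$, one uses the cocycle identity $A_{k+x}(\theta)=A_{k}(R^{x}\theta)A_{x}(\theta)$ together with $\|A\|=\|A^{-1}\|$ in $\slr$ to conclude $\max\{\|A_{x}(\theta)\|,\|A_{k+x}(\theta)\|\}\ge e^{(1-\tau)k\ly(E)/2}$. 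The discontinuities of $f$ enter only through the count $k|J_{f}|$ of polynomial pieces and are harmless.
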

   This propostion  
is essentially a corollary of the following 
 Lemma. 
    
  \begin{lemma}\label{SOlemma}
     Suppose $f\in\PLg$, $\ly(E)>0.$ 
  For any $\tau>0$  
     there exists $k_\tau  =  k_\tau(E) < \infty$ so that if
     $q_n>e^{k_\tau\ly(E)\tau\over \gamma}$, then for any $k\in\bbZ^+$ such that
     $k_\tau<k<\frac{\gamma}{\ly(E)\tau}\ln q_{n}$
     then for
     any $\theta \in \bbT$
     there is some $0<x\leq q_n + q_{n - 1} - 1$  so that for any $z\in\bbC$ with $|z-E|<\exp\{-\tau k\ly(E)\}$
     \[ \left\|A_k^{f,z}\left(R^x\theta\right)\right\|\geq e^{k(1-\tau)\ly(E)}. \]
     If $\ly$ is continuous and $U$ is compact, then $k_\tau$ can be chosen uniformly over $E \in U$. 
   \end{lemma}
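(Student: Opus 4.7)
My plan combines three ingredients: subadditivity plus Chebyshev to produce a positive-measure set of phases where $\|A_k^{f,E}\|$ grows at the Lyapunov rate, a Denjoy--Koksma--type transfer of that measure bound to the orbit $\{R^x\theta\}_{x=0}^{q_n+q_{n-1}-1}$, and a polynomial perturbation argument in $z$. First, the subadditive identity $\ly(E)=\inf_k \tfrac{1}{k}\int\log\|A_k^{f,E}(\theta)\|\,d\theta$ gives $\int\log\|A_k^{f,E}\|\,d\theta\geq k\ly(E)$ for every $k$. Combined with the crude upper bound $\log\|A_k^{f,E}(\theta)\|\leq k\log M_E$, where $M_E:=|E|+\|f\|_\infty+2$, a Chebyshev inequality yields a lower bound on $\big|\{\theta:\tfrac{1}{k}\log\|A_k^{f,E}(\theta)\|\geq(1-\tau/4)\ly(E)\}\big|$ by some $c_\tau>0$ independent of $k$.

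The core technical step transfers this bulk statement to the orbit. By the three-distance theorem, $\{R^x\theta\}_{x=0}^{q_n+q_{n-1}-1}$ partitions $\bbT$ into arcs of length $\asymp q_n^{-1}$. Since $f\in\PLg$, the function $\tfrac{1}{k}\log\|A_k^{f,E}\|$ is piecewise $\gamma$-H\"older on $\bbT$ with at most $k|J_f|$ discontinuities and a modulus of continuity controlled via the product structure of the cocycle. A quantitative Denjoy--Koksma estimate then produces
\[ \frac{1}{N}\sum_{x=0}^{N-1}\frac{1}{k}\log\|A_k^{f,E}(R^x\theta)\|\geq \ly(E)-\mathrm{err}(k,q_n),\qquad N:=q_n+q_{n-1}, \]
where $\mathrm{err}(k,q_n)$ combines an $O(q_n^{-\gamma})$ H\"older-Riemann-sum contribution (weighted by the effective H\"older constant of the cocycle) and an $O(k|J_f|\log M_E/q_n)$ correction from the $k|J_f|$ arcs that contain a discontinuity of $\log\|A_k^{f,E}\|$. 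The coupled hypotheses $q_n>e^{k_\tau\ly(E)\tau/\gamma}$ and $k<\tfrac{\gamma\ln q_n}{\tau\ly(E)}$ are engineered precisely to force $\mathrm{err}(k,q_n)\leq \tau\ly(E)/4$ once $k_\tau$ is chosen sufficiently large. Since the maximum dominates the average, some $0<x\leq N-1$ achieves $\log\|A_k^{f,E}(R^x\theta)\|\geq k(1-\tau/2)\ly(E)$.

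For $z\in\bbC$ with $|z-E|<e^{-\tau k\ly(E)}$, the polynomial dependence $z\mapsto A_k^{f,z}(\theta_0)$ (affine in $z$ per factor) yields $\|A_k^{f,z}(\theta_0)-A_k^{f,E}(\theta_0)\|\leq kM_E^{k-1}|z-E|$. The naive factor $M_E^{k-1}$ is swamped unless $\theta_0=R^x\theta$ is a ``regular'' orbit point at which intermediate norms $\|A_j^{f,E}(R^x\theta+j'\omega)\|$ respect the Lyapunov rate, in which case the telescoped derivative $\partial_z A_k$ is controlled by $k e^{(k-1)\ly(E)}$ rather than $kM_E^{k-1}$. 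The Birkhoff step must therefore be strengthened to produce such a regular $x$, after which $\|A_k^{f,z}(R^x\theta)\|\geq e^{k(1-\tau)\ly(E)}$ follows immediately by reverse triangle inequality in the window dictated by $k<\tfrac{\gamma\ln q_n}{\tau\ly(E)}$. Uniformity in $E\in U$ when $\ly$ is continuous and $U$ is compact follows from Dini's theorem applied to the monotone family $\tfrac{1}{k}\int\log\|A_k^{f,E}\|\searrow\ly(E)$ together with continuity of all constants in $E$.

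The principal obstacle is the coupling between the Birkhoff and perturbation steps: the Denjoy--Koksma error involves a H\"older constant of $\log\|A_k^{f,E}\|$ that the naive product rule bounds by $kM_E^{k-1}$, whereas the perturbation in $z$ needs an effective rate closer to $e^{k\ly(E)}$. Reconciling these---by tracking cocycle regularity along the selected orbit point and treating the $k|J_f|$ discontinuity arcs individually---is precisely the extension of the technique of \cite{JMv12} to the piecewise-H\"older setting, and is what forces the particular scaling of $k$ against $\ln q_n$ in the hypothesis.
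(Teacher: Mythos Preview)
Your proposal correctly identifies the architecture---a large set of good phases, a transfer to the orbit $\{R^x\theta\}$, a perturbation estimate in $z$---but the two steps you flag as ``the principal obstacle'' are in fact the heart of the lemma, and your proposed resolution of them does not close. The Denjoy--Koksma error for $\tfrac{1}{k}\log\|A_k^{f,E}\|$ scales with the $\gamma$-H\"older constant of that function, which the product rule bounds only by $C\,M_E^{k-1}$; the resulting error is $O(M_E^{k}q_n^{-\gamma})$, and the hypothesis $k<\tfrac{\gamma\ln q_n}{\tau\ly(E)}$ is precisely too weak to make this small whenever $\tau\ly(E)<\log M_E$, i.e.\ whenever $\tau$ is small. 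Your remedy---strengthening the Birkhoff step to select a ``regular'' $x$ at which all intermediate norms $\|A_j^{f,E}(R^x\theta+j'\omega)\|$ already obey the Lyapunov rate---is not supplied by any averaging argument: Birkhoff gives you a single large $\|A_k\|$ along the orbit, not control of the whole triangular array of sub-products needed for the telescoped $\partial_z A_k$ bound.

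The paper bypasses both difficulties via two ingredients you do not invoke. First, Corollary~\ref{pwfurmana} (the extension of Furman's uniform upper semicontinuity to almost-continuous subadditive cocycles) gives $\|A_k^{f,E}(\theta)\|\leq e^{k(\ly(E)+\bar\eta)}$ for \emph{every} $\theta$ once $k$ is large; this replaces each $M_E^{k}$ in the telescoped perturbation bound by $e^{k(\ly(E)+\bar\eta)}$ uniformly in $\theta$ (Corollary~\ref{mc}), and together with $\int\tfrac{1}{k}\log\|A_k\|\geq\ly(E)$ forces $|V_k(a\ly(E),A^{f,E})|\geq\tfrac12$. Second, instead of Denjoy--Koksma, the paper passes to a Fej\'er approximant $f_N$ with $N=\exp\{k\nu\ly(E)/\gamma\}$: then $\|A_k^{f_N,E}(\theta)\|^2$ is piecewise a polynomial of degree $2k(2N+1)$ on $k|J_f|$ arcs, so the superlevel set $V_k(b\ly(E),A^{f_N,E})$---still of measure at least $\tfrac12$ by the uniform perturbation estimate---must contain an \emph{interval} of length at least $\tfrac{1}{4k^2(2N+1)|J_f|}\geq e^{-k\tau\ly(E)/\gamma}\geq q_n^{-1}$. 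Lemma~\ref{JitoLemma} then places some $R^x\theta$ inside it, and the same Corollary~\ref{mc} handles the $z$-variation directly, with no need to locate a ``regular'' orbit point. The scaling $k<\tfrac{\gamma\ln q_n}{\tau\ly(E)}$ is thus dictated by the degree of the Fej\'er polynomial, not by a H\"older--Riemann-sum error.
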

   We will in fact prove a more general statement, for cocycles defined in a neighborhood of $f$, see Lemma \ref{SOlemma2}.
   The proofs of Proposition \ref{polygrow} and Lemma \ref{SOlemma}
   are in section \ref{SA}.  They are based on
    section \ref{UU} where we prove convergence results for discontinuous cocycles in a general setting.
   The remainder of this section is dedicated to proving Theorems \ref{partloc} and \ref{totalloc}.
 
For $f:\bbZ \to H,$ where $H$ is some Banach space and $L\geq1$, the
   truncated $\ell^2$ norm in the positive direction is defined as  
   \[ \| f \|_L^2 = 
           \sum_{n=1}^{\lfloor L\rfloor} |f(n)|^2  
            + \left(L-\lfloor L\rfloor\right) |f(\lfloor L\rfloor +1)|^2. \]
   The truncated $\ell^2$ norm in both directions, for $L_1,L_2 \geq1$, will be denoted
   \[ \|f \|^2_{L_1,L_2} = 
            \sum_{n= - \lfloor L_1\rfloor}^{\lfloor L_2 \rfloor} |f(n)|^2 + 
                 \left(L_1-\lfloor L_1\rfloor\right) |f(-\lfloor L_1\rfloor -1)|^2
                 +\left(L_2-\lfloor L_2\rfloor\right) |f(\lfloor L_2\rfloor +1)|^2. \]
   With $A_\bullet(\theta,E)$ a function on $\bbZ$,
   define $\tilde{L}^+_\epsilon(\theta,E)\in\bbR^+$ by requiring that
   the truncated $\ell^2$ norm obeys
   \[ \| A_\bullet(\theta,E)\|_{\tilde{L}^+_\epsilon(\theta,E)} 
                    =  2\|A_1(\theta,E)^{-1}\|\epsilon^{-1}. \]
   We now recall the following result of Killip, Kiselev and Last,
   \begin{lemma}\label{KKLthm}{(Theorem 1.5 of \cite{KKL})}
      Let $\ham$ be a Schr\"{o}dinger operator and $\mu$ the spectral
      measure of $\ham$ and $\delta_1.$
      Let $T > 0$ and $L_1,L_2 > 2$, then
      \begin{equation}\label{KKLloc}
         \langle \|e^{-it\ham}\delta_1\|^2_{L_1,L_2} \rangle_T 
           > C\mu\left(\left\{ E:\tilde{L}^-_{T^{-1}} \leq L_1;
                            \tilde{L}^+_{T^{-1}} \leq L_2 \right\}\right)
      \end{equation}
      where $C$ is a universal constant.
   \end{lemma}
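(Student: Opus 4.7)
The plan is to convert the time-averaged dynamical quantity into an $L^2$ integral of Green's-function matrix elements via a Parseval identity, control those matrix elements pointwise using the Weyl $\ell^2$ solutions together with the transfer-matrix formulation, and then relate the resulting integral to $\mu$ through the Poisson representation of the spectral measure.

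First, I would invoke the Laplace-transform identity $(\ham - z)^{-1}\delta_1 = i\int_0^\infty e^{izt}e^{-it\ham}\delta_1\,\de t$, valid for $\operatorname{Im} z > 0$, and apply Plancherel in $E$ with $z = E + i/T$. This gives
\[ \langle |\langle \delta_n, e^{-it\ham}\delta_1\rangle|^2\rangle_T \;=\; \frac{1}{\pi T}\int_\bbR |G(n,1;E+i/T)|^2\,\de E, \]
where $G(n,m;z) := \langle\delta_n, (\ham - z)^{-1}\delta_m\rangle$. Summing over $n\in\{-\lfloor L_1\rfloor,\ldots,\lfloor L_2\rfloor\}$ with the fractional weights built into $\|\cdot\|_{L_1,L_2}$ converts the left-hand side of (\ref{KKLloc}) into $\frac{1}{\pi T}\int_\bbR \|G(\cdot,1;E+i/T)\|^2_{L_1,L_2}\,\de E$.

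Second, I would decompose $u(n) := G(n,1;E+i/T)$ using the Weyl $\ell^2$ solutions $\psi_\pm$ of $\ham\psi = (E+i/T)\psi$: one has $u(n) = \psi_+(n)\psi_-(1)/W$ for $n\geq 1$ and $u(n) = \psi_-(n)\psi_+(1)/W$ for $n\leq 1$, where $W$ is the Wronskian of $\psi_\pm$. The transfer-matrix relation $\binom{\psi_+(n+1)}{\psi_+(n)} = A_n^{f,E+i/T}(\theta)\binom{\psi_+(1)}{\psi_+(0)}$ ties the size of $\psi_+(n)$ to $\|A_n^{f,E+i/T}(\theta)\|$ and the boundary data $(\psi_+(0),\psi_+(1))$; the $SL_2$ identity $\|B^{-1}\| = \|B\|$ lets one pass between the two forms, and an analogous analysis applies to $\psi_-$ on the negative half-line.

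Third, the definition of $\tilde{L}^\pm_{T^{-1}}$ has been calibrated precisely so that once $L_2 \geq \tilde{L}^+_{T^{-1}}$ one has $\sum_{n=1}^{L_2}\|A_n^{f,E}(\theta)\|^2 \geq 4\|A_1^{-1}\|^2 T^2$, and symmetrically on the left. Cascading this through the Weyl-solution representation should produce a pointwise lower bound of the form
\[ \|G(\cdot,1;E+i/T)\|^2_{L_1,L_2} \;\geq\; c T^2 \bigl(\operatorname{Im} G(1,1;E+i/T)\bigr)^2 \]
valid for every $E$ in $\{E : \tilde{L}^-_{T^{-1}}\leq L_1,\ \tilde{L}^+_{T^{-1}}\leq L_2\}$. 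Combined with the standard fact that $\frac{1}{\pi}\operatorname{Im} G(1,1;E+i\epsilon)\,\de E \to \de\mu(E)$ as $\epsilon\to 0^+$ (the Borel-transform representation of the spectral measure), the integration yields the asserted lower bound by $C\mu(\cdot)$.

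The main obstacle I anticipate is the third step: extracting from the calibration of $\tilde{L}^\pm$ a pointwise lower bound on $|u(n)|^2$ over the truncated window that is strong enough to survive integration against the regularized spectral measure. The normalization factor $\|A_1^{-1}\|$ in the definition of $\tilde{L}^\pm_\epsilon$ reflects the need to convert boundary-data estimates into full solution-norm estimates, and tracking this factor cleanly through the Wronskian and Weyl-solution inequalities is the delicate algebraic part; the first two steps are, by contrast, essentially bookkeeping.
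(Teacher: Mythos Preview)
The paper does not prove this lemma at all; it is quoted verbatim as Theorem 1.5 of \cite{KKL} and then applied. Your three-step outline (Parseval identity for the time average, Weyl-solution representation of the resolvent, Jitomirskaya--Last calibration of the critical length) is precisely the strategy used in \cite{KKL}, so in that sense your proposal is on target.

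There is, however, a concrete error in the pointwise bound you write down in step three. The inequality $\|G(\cdot,1;E+i/T)\|^2_{L_1,L_2}\geq cT^2\bigl(\operatorname{Im}G(1,1;E+i/T)\bigr)^2$ cannot hold: by the spectral theorem the \emph{full} $\ell^2$ norm satisfies the exact identity $\|G(\cdot,1;E+i/T)\|_{\ell^2(\bbZ)}^2 = T\cdot\operatorname{Im}G(1,1;E+i/T)$, so the truncated norm is at most $T\cdot\operatorname{Im}G$, and your proposed lower bound exceeds this whenever $\operatorname{Im}G>1/(cT)$ (e.g.\ near an atom of $\mu$). The correct target is $\|G(\cdot,1;E+i/T)\|^2_{L_1,L_2}\geq cT\cdot\operatorname{Im}G(1,1;E+i/T)$ for $E$ in the set $\{\tilde L^{-}_{T^{-1}}\le L_1,\ \tilde L^{+}_{T^{-1}}\le L_2\}$; this is exactly what the Jitomirskaya--Last inequalities deliver once the lengths exceed the critical scales, and it integrates against $\tfrac{1}{\pi T}\,\de E$ to give the Poisson regularization of $\mu$ directly. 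With that correction the rest of your sketch goes through, and the ``delicate algebraic part'' you flag is handled in \cite{KKL} via the two half-line $m$-functions and the constancy of the Wronskian.
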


   {\bf Proof of Theorem \ref{partloc}.}

We first prove part 3. Assume $\mu_\theta(U)>0 $. For
      $\epsilon>0,$ let $\chi>0$ be such that
      \[ \mu_\theta(\{E\in U: \ly(E) > \chi \}) > \mu_\theta(U) -
      \frac{\epsilon}{2}. \]
Let $\zeta >0.$ First consider the Diophantine case. Then by Proposition
\ref{polygrow} with $\delta=3$, for $E\in U$ we have
$\Phi_{\zeta,3}(E,t)  > c_E>0$ for $t>T_E.$ Therefore we can find $
M_\epsilon>0$, so that outside a set of $E$ of measure $\frac{\epsilon}{2}$,

      \begin{equation} \label{growth1}
        \| A_\bullet(\theta,E)\|_{T^\zeta} > T
      \end{equation}
   for $T>M_\epsilon.$ 
      Thus $\tilde{L}_{T^{-1}}^\pm(\theta,E) < T^{\zeta}$ for all $T>M_\epsilon$.
      We have from Lemma \ref{KKLthm}
      \begin{equation}\label{pc1}
         \langle \|e^{-it\ham_\theta}\delta_0\|^2_{T^\zeta} \rangle_T 
                                      > C(\mu_\theta(U) - \epsilon).
      \end{equation}
      If $\omega$ is not Diophantine, (\ref{growth1}) is satisfied for a
      sequence $T_k\to\infty$, thus (\ref{pc1}) holds for a sequence $T_k$.
      As (\ref{pc1}) holds for all $\epsilon$ we can let $\epsilon\to
      0$.\qed

To prove parts 1,2 note that if $\mu_\theta(U) + \mu_{R\theta}(U)>0 $
we have by (\ref{pc1}) that either $\langle
\|e^{-it\ham_\theta}\delta_0\|^2_{T_k^\zeta} \rangle_{T_k} >c$ or $\langle
\|e^{-it\ham_\theta}\delta_1\|^2_{T_k^\zeta} \rangle_{T_k} >c$, so
$P_{T_k}(T_k^\zeta)>c,$ 
thus
 $\frac{   \ln(  \inf\{ L |  P_{T_k}(L)  > \delta  \}  )  }{  \ln T_k } <\zeta.$
 Since $\zeta>0$ is arbitrary,  for such
 $\theta$ we have that  $\underline\xi$ (or $\overline\xi$ for Diophantine $\theta$)
 are equal to zero. Finally observe that since the set $\{\theta: \mu_\theta(U) + \mu_{R\theta}(U)>0\}$ is shift invariant,
      $N(U)>0$ implies $\mu_\theta(U) + \mu_{R\theta}(U)>0 $ for
      a.e. $\theta$. \qed

   The following result of Damanik and Tcheremchantsev allows us to control
   the evolution of the entire wavepacket.
   \begin{lemma}{(Corollary 1 of \cite{DT08} plus Theorem 1 of \cite{DT2007})}\label{DTtheorem}
     Let $\ham_\theta$ be operator (\ref{schrodinger1}), 
     with $f$ real valued and bounded, 
     and $K\geq 4$ is such that $\sigma(\ham_\theta)\subset [-K+1,K-1]$.
     Suppose for all $\zeta\in(0,1)$, we have
     \begin{equation} \label{dtgrowth}
       \int_{-K}^K \left(\min_{ s \in\{-1,1\}}\max_{1\leq n \leq T^\zeta}
          \left \|A_{s n}\left(E+\frac{i}{T}\right)\right\|^2\right)^{-1}\de E
		  = O(T^{-\delta})
     \end{equation}
     for every $\delta \geq 1.$ Then $\beta^+(p) = 0$ for all $p>0$.
     If (\ref{dtgrowth}) is satisfied for a sequence
     $T_k\to\infty,$ then $\beta^-(p) = 0$ for all $p > 0$.
   \end{lemma}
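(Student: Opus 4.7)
The plan is to reduce the transport moments to integrals of resolvent matrix elements via a Plancherel identity in time, and then bound those resolvent elements by the transfer matrix norms appearing in (\ref{dtgrowth}). The starting point is the standard Parseval-type formula for the Laplace transform
\[
\left\langle \left|\left\langle e^{-it\ham}\delta_1, \delta_n\right\rangle\right|^2 \right\rangle_T
  = \frac{1}{\pi T}\int_{\bbR} \left|G_{1,n}(E+i/T)\right|^2 \de E,
\]
where $G_{m,n}(z) = \langle (\ham-z)^{-1}\delta_m, \delta_n\rangle$. This converts the time-averaged wave packet into an energy integral along the line $\mathrm{Im}\,z = 1/T$. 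Since $\sigma(\ham)\subset[-K+1,K-1]$, the contribution from $|E|>K$ can be shown to be exponentially small in $T$ by a Combes--Thomas type estimate, so it suffices to integrate over $[-K,K]$.

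Next I would bound the Green's function by transfer matrices. Using Cramer's rule for $(\ham-z)^{-1}$, together with the fact that the principal minors of $\ham-z$ are expressible through entries of $A^{f,z}_n$ and $A^{f,z}_{-n}$, one obtains bounds of the form
\[
|G_{1,n}(z)|^2 \leq \frac{C}{\|A_n(z)\|^2}, \qquad |G_{1,-n}(z)|^2 \leq \frac{C}{\|A_{-n}(z)\|^2},
\]
up to factors polynomial in $n$ and tame in $z$. A monotonicity trick lets one replace $\|A_n\|^{-2}$ by $(\max_{1\leq j\leq n}\|A_j\|^2)^{-1}$ for $n$ at the scale $T^\zeta$; combining the left and right estimates gives, for each $n$ near $T^\zeta$, a bound by the $\min_{\iota\in\{-1,1\}}\max_{1\leq j\leq T^\zeta}\|A_{\iota j}\|^{-2}$ appearing in (\ref{dtgrowth}).

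Finally I would estimate $\langle|X|^p(T)\rangle = \sum_n (1+|n|)^p a(n,T)$ by splitting at $|n| = T^\zeta$. The inner sum is bounded by $T^{p\zeta}$ using $\sum_n a(n,T) = 1$. For the outer sum, insert the resolvent-to-transfer-matrix bound and the Plancherel identity to get
\[
\sum_{|n|>T^\zeta}(1+|n|)^p a_T(n) \leq C\,T^{p}\int_{-K}^{K}\left(\min_{\iota}\max_{1\leq j\leq T^\zeta}\|A_{\iota j}(E+i/T)\|^2\right)^{-1}\de E,
\]
after summing a geometric-type series in $n$. Taking $\delta = p+2$ in (\ref{dtgrowth}) makes the outer sum $o(1)$, whence $\langle|X|^p(T)\rangle \leq T^{p\zeta} + o(1)$. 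Since $\zeta\in(0,1)$ is arbitrary, $\beta^+(p)=0$. The subsequential version giving $\beta^-(p)=0$ is identical, invoking the hypothesis only along $T_k\to\infty$ and using that the lower exponent is defined via the time-averaged moment $\langle|X|_T^p\rangle$, which plugs directly into the Laplace-transform formalism.

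The main obstacle is the Green's function--transfer matrix bound: one must handle both sides of the lattice, convert the pointwise-at-$n$ estimate into the $\min_\iota\max_j$ structure, and justify exchanging the energy integral with the sum over $n$ for the unbounded weight $(1+|n|)^p$. Verifying also that the $E$-integral can be truncated to $[-K,K]$ without spoiling the $T^{-\delta}$ decay is a technical but standard point. These are precisely the ingredients assembled in the Damanik--Tcheremchantsev framework cited.
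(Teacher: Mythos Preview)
The paper does not prove this lemma at all: it is quoted verbatim as a combination of Corollary~1 of \cite{DT08} and Theorem~1 of \cite{DT2007}, and is then used as a black box in the proof of Theorem~\ref{totalloc}. So there is no ``paper's own proof'' to compare against; your proposal is a sketch of the Damanik--Tcheremchantsev argument itself.

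As such a sketch it is broadly on target, but there is one genuine gap worth flagging. The Parseval identity you write,
\[
\left\langle \left|\left\langle e^{-it\ham}\delta_1, \delta_n\right\rangle\right|^2 \right\rangle_T
  = \frac{1}{\pi T}\int_{\bbR} \left|G_{1,n}(E+i/T)\right|^2 \de E,
\]
yields the \emph{time-averaged} quantity $a_T(n)$, not $a(n,T)$. In the paper's conventions $\beta^-$ is defined through $\left\langle |X|^p_T\right\rangle=\sum_n(1+|n|)^p a_T(n)$, so your argument goes through cleanly for $\beta^-(p)=0$. But $\beta^+$ is defined through the \emph{non}-time-averaged moment $\left\langle |X|^p(T)\right\rangle=\sum_n(1+|n|)^p a(n,T)$, and you silently pass from $a(n,T)$ to $a_T(n)$ in your outer-sum estimate. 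There is no pointwise bound of $a(n,T)$ by $a_T(n)$, so this step fails as written. The actual DT2007 argument for the non-averaged outside probability does not go through the Laplace--Parseval identity alone; it uses a contour/Dunford representation of $e^{-iT\ham}$ (or an equivalent direct resolvent bound) that controls $a(n,T)$ itself in terms of $\int_{-K}^{K}\|A_{\cdot}(E+i/T)\|^{-2}\,\de E$, at the cost of an extra polynomial factor in $T$ which is absorbed by taking $\delta$ large. Your final paragraph acknowledges the $\beta^-$ case plugs into the Laplace formalism, but the $\beta^+$ case needs this extra ingredient, not just ``identical'' reasoning.

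A second, more minor imprecision: the bound $|G_{1,n}(z)|^2\le C\|A_n(z)\|^{-2}$ is not literally correct, and the passage from there to the $\min_\iota\max_{1\le j\le T^\zeta}$ structure is not a ``monotonicity trick''. What is actually used is a two-sided Jitomirskaya--Last/Gilbert--Pearson type estimate relating the resolvent on the half-line to the truncated $\ell^2$ norms of transfer matrices, together with the $\slr$ property; this is where the $\max$ over $1\le j\le T^\zeta$ and the $\min$ over left/right come from. Your outline points in the right direction but the mechanism is different from what you describe.
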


    {\bf Proof of Theorem \ref{totalloc}}
      Assume $\sigma(\ham)\subset [-K+1,K-1]$ . Let $R > K$ and let $\chi = \inf_{|z|<R}\{\ly(z)\}$.
      We assume continuity so $\chi > 0$ and there exists 
      a large $M<\infty$ so that (\ref{maxcocycle2}) holds 
      uniformly for all $T > M$ and
      $E\in\{E\in\bbC:|\mathcal{R}(E)| \leq K;|\mathcal{I}(E)|\leq 1\}$.
      Thus for large enough $T$ and $\omega$ Diophantine we have
      \begin{equation} \label{growth2}
          \int_{-K}^K \left(\max_{1\leq \pm n \leq T^\zeta}
          \left \|A_n\left(E+\frac{i}{T}\right)\right\|^2\right)^{-1}\de E
	  \leq CKT^{-\delta}
	  = O(T^{-\delta}).
      \end{equation}
      If $\omega$ is not Diophantine, we use (\ref{maxcocycle1}) to find
      (\ref{growth2}) is satisfied for a sequence of $T_k\to \infty$.\qed

   \section{Rough cocycles}\label{UU}

   The goal of this section is to establish the uniformity of uppersemicontinuity of the Lyapunov exponent.
   It is known (see e.g. \cite{JMv12}) the pointwise Lyapunov exponent
   has uniform upper bounds in small neighborhoods for continuous cocycles.
   Here we show  the requirement of continuity of cocycles can be relaxed.
   Let $(X,R,\mu)$ be a uniquely ergodic compact Borel probability space.
   We will say a function $f$ is {\it almost continuous} if its 
   set of discontinuities has a closure of measure zero.
   Let $\mathbb{B}_\infty(X)$ be the space of bounded functions on $X$
   with
   \[ \|f\|_\infty = \sup_{x\in X} |f(x)|, \]
   Notice that sets of measure zero are not dismissed by this norm. 
   For a Borel set $D\subset X$ define a seminorm 
   \[ \|f\|_{D,\infty} = \sup_{x\in D} |f(x)|. \]

   A subadditive cocycle on $(X,T,\mu)$ is a sequence of functions $f_1,f_2,\ldots$ on $X$
   so that $f_{n+m}(x) \leq f_n(x) + f_m(R^nx)$.
   We use the notation $\{f\}$ for a subadditive cocycle $f_1, f_2,\ldots.$   
   Let $\Delta(X)$ be the set of all $\{f\}$ with $f_n\in
   \mathbb{B}_\infty$ for all $n$ . By Kingman's subadditive ergodic theorem \cite{W},
   a subadditive cocycle $f_n(\cdot)$ on $(X,T,\mu)$ obeys, 
   for $\mu$-almost all $x\in X$,
   \[ \lim_{n\to\infty}\frac{1}{n}f_n(x)
      = \lim_{n\to\infty}\int_Xf_n(x)\mu(\de x) = \Lambda(f)\] 
   Let $E_n = E_n(\{f\})$ be the closure of the set of discontinuities of $f_n$.
   For a set $E\subset X$ define a ball, $B_{\delta} (E) = \{x\in X: \exists e\in E, |x-e| < \delta\}$.
   Then we introduce, for $\delta\geq 0$ 
   the sequence of sets $D_n = X\backslash \overline{B_{\delta}
     (E_n(\{f\})) }$, and a pseudometric
   \[ 	  \de_\delta\left(\{g\},\{f\}\right) = \sum_{n\geq 1}
           \frac{1}{2^n}\frac{\|g_n - f_n\|_{D_n,\infty}}{1+ \|g_n-f_n\|_{D_n,\infty}}. \]
   From this pseudometric we define the $\delta$-$\sigma$ neighborhood of $\{f\}$ as,
   \[\mathcal{N}_{\delta;\sigma}(\{f\}) = 
         \left\{\{g\}:\de_\delta(\{f\},\{g\})<\sigma\right\}.\]

   \begin{theorem}\label{pwfurman}
      Suppose $\{f\}\in\Delta(X)$ so that $f_n$ is almost continuous for all $n.$ Let $\epsilon > 0$.
      There exists  $\delta > 0$ and $\sigma > 0$ and $K < \infty$ all depending on $\{f\}$ and $\epsilon$
       so that for $\{g\} \in\mathcal{N}_{\delta;\sigma}(\{f\})\cap \mathbb{B}_\infty$
	  and $n >  K$ implies
	  \[\frac{1}{n}g_n < \Lambda(f) + \epsilon\max\left\{\|g\|_\infty,1 \right\}\]
   \end{theorem}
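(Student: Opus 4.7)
The plan is to adapt the classical Furman argument on uniform upper semicontinuity of Lyapunov exponents over uniquely ergodic systems to the almost-continuous setting, by approximating the relevant iterate $f_{k_0}$ by a continuous cutoff away from its discontinuity set, and then running the subadditive blocking argument for $g_n$ on top of this approximation.

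First, given $\epsilon>0$, Fekete's lemma applied to $\frac{1}{k}\int f_k\,d\mu$ lets us pick $k_0$ with $\frac{1}{k_0}\int f_{k_0}\,d\mu<\Lambda(f)+\epsilon/12$. Since $f_{k_0}$ is almost continuous, $\mu(E_{k_0})=0$, so by outer regularity I can choose $\delta>0$ so small that $\mu(\overline{B_{3\delta}(E_{k_0})})$ is much smaller than $\epsilon/(12k_0(1+\|f_{k_0}\|_\infty))$. A Urysohn function $\chi_\delta\in C(X,[0,1])$ vanishing on $\overline{B_\delta(E_{k_0})}$ and equal to $1$ outside $B_{2\delta}(E_{k_0})$ defines $\tilde f_{k_0}:=\chi_\delta f_{k_0}$, which is continuous on $X$ because $f_{k_0}$ is continuous on the support of $\chi_\delta$ and $\chi_\delta$ kills $f_{k_0}$ before the discontinuity is reached. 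Pick also a continuous $\psi$ with $\mathbf{1}_{B_{2\delta}(E_{k_0})}\le\psi\le\mathbf{1}_{B_{3\delta}(E_{k_0})}$. Then $|\tilde f_{k_0}-f_{k_0}|\le\|f_{k_0}\|_\infty\psi$.

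Next, choose $\sigma$: since the $k_0$-th term in the pseudometric $\de_\delta$ carries weight $2^{-k_0}$, any $g\in\mathcal{N}_{\delta;\sigma}(\{f\})$ satisfies $\|g_{k_0}-f_{k_0}\|_{D_{k_0},\infty}\le 2^{k_0+1}\sigma$, so $\sigma<\epsilon\,2^{-k_0-2}$ makes this difference less than $\epsilon/2$ on $D_\delta$. On $B_\delta(E_{k_0})$ only the subadditive bound $g_{k_0}\le k_0\|g\|_\infty$ is available. For $n=mk_0+r$, the Furman blocking trick gives, for each $s\in\{0,\ldots,k_0-1\}$, the subadditive inequality $g_n(T^sx)\le\sum_{j=0}^{m-1}g_{k_0}(T^{jk_0+s}x)+g_r(T^{mk_0+s}x)$. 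Summing over $s$ converts the double sum into $\sum_{l=0}^{mk_0-1}g_{k_0}(T^lx)$, and the elementary comparison $|g_n(T^sx)-g_n(x)|\le 2s\|g\|_\infty$ (itself a consequence of subadditivity) yields
\[ k_0\, g_n(x)\le \sum_{l=0}^{mk_0-1}g_{k_0}(T^lx)+O(k_0^2\|g\|_\infty). \]

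I then split the orbit sum according to whether $T^lx\in D_\delta$ or $T^lx\in B_\delta(E_{k_0})$, substituting $g_{k_0}\le f_{k_0}+\epsilon/2\le\tilde f_{k_0}+\|f_{k_0}\|_\infty\psi+\epsilon/2$ in the first case and $g_{k_0}\le k_0\|g\|_\infty\psi$ in the second. Unique ergodicity of $T$ applied to the continuous functions $\tilde f_{k_0}$ and $\psi$ gives a uniform $N_0$ such that for $mk_0\ge N_0$ the Birkhoff averages are within $\epsilon/12$ of their integrals; inserting this, using $\int\tilde f_{k_0}\,d\mu\le\int f_{k_0}\,d\mu+\|f_{k_0}\|_\infty\mu(B_{2\delta}(E_{k_0}))$, and dividing by $n\ge mk_0$ converts all remaining errors into controlled multiples of $\epsilon$ or $\epsilon\|g\|_\infty$, with the smallness coming from the three preselected parameters $\delta$, $\sigma$, and $N_0$. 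Choosing $K$ large enough that the $O(k_0^2\|g\|_\infty/n)$ boundary term is below $\epsilon\|g\|_\infty/12$ closes the estimate.

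The main obstacle is the interplay of three small parameters. The weight $2^{-k_0}$ forces $\sigma$ to be exponentially small in $k_0$, which is fine since $k_0$ is fixed first from $f$ and $\epsilon$. The crude bound $g_{k_0}\le k_0\|g\|_\infty$ on the bad neighborhood $B_\delta(E_{k_0})$ injects a factor $k_0\|g\|_\infty\int\psi\,d\mu$ into the estimate; this can be absorbed into $\epsilon\max(\|g\|_\infty,1)$ only because $\mu(B_{3\delta}(E_{k_0}))$ can be taken arbitrarily small in $\epsilon/k_0$ independent of $\|g\|_\infty$. The reason the bound is allowed to scale with $\|g\|_\infty$ at all is exactly to permit this crude handling of the discontinuity set; were one to demand a bound independent of $\|g\|_\infty$, one would need quantitative regularity of $g_{k_0}$ near $E_{k_0}$, which is unavailable in the rough setting.
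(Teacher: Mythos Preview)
Your overall strategy is correct and interestingly different from the paper's. Both arguments handle the discontinuity set by shrinking a $\delta$-neighborhood of $E_{k_0}$ (resp.\ $E_n$) and using a Urysohn cutoff, but the paper follows the Katznelson--Weiss \emph{variable-length} blocking: it selects $n$ by Kingman's a.e.\ convergence so that the set $J_n=\{|\tfrac1nf_n-\Lambda(f)|<\epsilon\}$ has large measure, then steps through the orbit by $n$ when in the good set $J_n\cap D_n$ and by $1$ otherwise, controlling the proportion of bad steps via a single Birkhoff average of a continuous majorant $h$ of $\mathbf 1_{J_n^c\cup D_n^c}$. You instead use the \emph{fixed-length} Furman blocking: select $k_0$ from the convergence of the integrals $\frac1{k_0}\int f_{k_0}$, average over the $k_0$ initial shifts, and invoke unique ergodicity for the two continuous functions $\tilde f_{k_0}$ and $\psi$. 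Your version avoids the pointwise ergodic theorem entirely; the paper's version avoids having to estimate Birkhoff sums of $\tilde f_{k_0}$ itself.

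There is one genuine slip. The claimed inequality $|g_n(T^sx)-g_n(x)|\le 2s\|g\|_\infty$ is \emph{not} a consequence of subadditivity: subadditivity only yields the one-sided bounds $g_n(x)\ge g_{n+s}(x)-s\|g_1\|_\infty$ and $g_n(T^sx)\ge g_{n+s}(x)-s\|g_1\|_\infty$, which do not combine to the two-sided estimate (a two-point counterexample with $g_1\equiv 0$ and $g_2$ large negative at one point shows this). The fix is standard and costs nothing: instead of decomposing $g_n(T^sx)$, decompose $g_n(x)$ directly for each $s\in\{0,\dots,k_0-1\}$ as
\[
g_n(x)\le g_s(x)+\sum_{j=0}^{m_s-1}g_{k_0}(T^{s+jk_0}x)+g_{r_s}(T^{n-r_s}x),
\]
with $n=s+m_sk_0+r_s$, $0\le r_s<k_0$, and sum over $s$. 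The boundary terms give $O(k_0^2\|g_1\|_\infty)$, and the inner sums coalesce (up to at most $2k_0$ indices, each contributing at most $k_0\|g_1\|_\infty$) into $\sum_{l=0}^{n-1}g_{k_0}(T^lx)$, yielding exactly your displayed inequality. The rest of your argument then goes through as written.
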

   The result extends the theorem of Furman \cite{F97} and our recent
   extension of it \cite{JMv12} to the case
   of almost continuous subadditive cocycles.
   Here is a simple appplication of the theorem to a single subadditive cocycle.
   \begin{corollary}\label{pwfurmana}
      Suppose $f_n$ are almost continuous and subadditive
      and $\|f_1\|_\infty<\infty$.
      For any $\epsilon> 0$ there is $K <\infty $ so that for 
      $n > K$ and all $x\in X$ we have
      \[ \frac{1}{n}f_n(x) < \Lambda(f) + \epsilon 
\]
   \end{corollary}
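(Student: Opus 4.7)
The plan is to apply Theorem \ref{pwfurman} with $\{g\} = \{f\}$, reducing the corollary to the trivial case of the neighborhood condition. Since the pseudometric satisfies $\de_\delta(\{f\},\{f\}) = 0$, we have $\{f\} \in \mathcal{N}_{\delta;\sigma}(\{f\})$ for any positive $\delta, \sigma$, so the neighborhood hypothesis of the theorem imposes no constraint. The remaining structural requirement $\{f\} \in \Delta(X)$, i.e.\ that every $f_n$ is bounded, follows from iterating the subadditivity inequality $f_{n+m} \leq f_n + f_m\circ T^n$ with $m=1$ to obtain $f_n(x) \leq \sum_{k=0}^{n-1} f_1(T^k x)$, and hence $\|f_n\|_\infty \leq n\|f_1\|_\infty < \infty$ by the hypothesis. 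Almost continuity of each $f_n$ is assumed directly.

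To absorb the factor $\max\{\|g\|_\infty, 1\}$ appearing in the conclusion of Theorem \ref{pwfurman}, apply that theorem to $\{f\}$ with parameter $\epsilon' := \epsilon/\max\{\|f\|_\infty, 1\}$ in place of $\epsilon$. This produces some threshold $K = K(f,\epsilon) < \infty$ such that, for every $n > K$ and every $x \in X$,
\[ \frac{1}{n} f_n(x) < \Lambda(f) + \epsilon' \max\{\|f\|_\infty, 1\} = \Lambda(f) + \epsilon, \]
which is exactly the claim. There is no genuine obstacle in this corollary; all the substantive work lies in Theorem \ref{pwfurman} itself, which provides the uniform upper semicontinuity of the Lyapunov exponent over a neighborhood of almost continuous subadditive cocycles.
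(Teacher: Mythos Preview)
Your proof is correct and is precisely the paper's (implicit) argument: the corollary is simply the special case $\{g\}=\{f\}$ of Theorem~\ref{pwfurman}, with $\epsilon$ rescaled to absorb the factor $\max\{\|f_1\|_\infty,1\}$. One small caveat: subadditivity yields only the upper bound $f_n(x)\le n\|f_1\|_\infty$, not the two-sided estimate $\|f_n\|_\infty\le n\|f_1\|_\infty$ you wrote, so $\{f\}\in\Delta(X)$ is not literally forced by the stated hypotheses; this is harmless, however, since the proof of Theorem~\ref{pwfurman} uses only upper bounds on $g_n$, and in the paper's applications $f_n=\ln\|M_n\|\ge 0$ anyway.
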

   A further corollary arises in the application to 
   matrix cocycles for an almost continuous matrix 
     $M:X\to SL_{2}(\bbC)$.  
  Let $R$ be a uniquely ergodic transformation on $X$, and define the associated cocycle,
    \[    M_k(\theta) = M(R^{(k-1)} \theta) \cdots M ( R \theta) .    \] 
    Let $E$ be the set of discontinuities of $M,$
   let $B = B_\delta(E)$ be the set of points with in distance $\delta$ of $E$.
    Let $E_n=E_n(\{\ln\|M_n\|\})$ 
    and notice that $B_\delta(E_n) \subset \cup_{\ell = 0}^{n-1}  R^{-\ell} B  $.
       Let $D_n = X\backslash \overline{ B_\delta (E_n)  }$.
   Finally,  let $\ly(M)$ be the Lyapunov exponent $\Lambda(\{\ln\|M_n\|\}).$  
   \begin{corollary}\label{mc0}
       Suppose $M:X\to SL_{2}(\bbC)$ is almost continuous and bounded. 
      For any $\epsilon>0$, there is 
      $\delta>0,$ $\rho>0$, and $K<\infty$   such that  
       if $\overline M:X\to SL_{2}(\bbC)$ is
        bounded \footnote{ Note there is no assumption of continuity} so that
      $\|\overline{M} - M\|_{X\backslash B_\delta(E),\infty}<\rho$, then $k > K$ implies
      \[\|M_k(\theta)-\overline{M}_k(\theta)\| <
           \max_{0\leq i\leq k-1}\{\|M(R^i\theta) - \overline{M}(R^i\theta)\|\}
            e^{k(\ly(M) + \epsilon\max\{1,\ln\|\overline M\|_\infty\} )}  \]
   \end{corollary}
   
For our application we only need the $\delta=0$ version:
   
   \begin{corollary}\label{mc}
        Suppose $M:X\to SL_{2}(\bbC)$ is almost continuous and bounded. 
      For any $\epsilon>0$, there is $\rho>0$ and $K<\infty$ such that 
      if $\overline M:X\to SL_{2}(\bbC)$ is
        bounded and 
      $\|\overline{M} - M\|_{\infty}<\rho$, then $k > K$ implies
      \[\|M_k(\theta)-\overline{M}_k(\theta)\| <
           \max_{0\leq i\leq k-1}\{\|M(R^i\theta) - \overline{M}(R^i\theta)\|\}
            e^{k(\ly(M) + \epsilon\max\{1,\ln\|\overline M\|_\infty\} )}  \]
   \end{corollary}

  { \bf Proof of Theorem \ref{pwfurman}} 
 Let $\epsilon < \left(1 + 2\|f_1\|_\infty\right)^{-1}$. 
      $X\backslash E_n$ is an open set of full measure, and for every 
      $x\in X\backslash  E_n$, $f_n$ is continuous in a neighborhood of $x$.
      The set 
      \[ J_n = 
        \left\{ x\in X\backslash  E_n: 
           |\tfrac{1}{n}f_n(x) - \Lambda(f)| < \epsilon \right\} \]
      is open and by Kingman's theorem  $\mu (J_n^c) \to 0$ as $n\to \infty$. 
 
      Let $n>1$ be large enough so that  $\mu(J_n^c) < \epsilon$.
      Let $\delta>0$ be such that $ \mu(\overline{B_\delta(E_n))} < \epsilon$.
      Define $D_n = X\backslash \overline{B_\delta(E_n)}$. 
      For any $\{g\}\in\mathcal{N}_{\delta,\epsilon/2^{n}}(\{f\})$,  
      and for $x\in J_n \cap D_n$ we have $f_n(x) < n(\Lambda(f) + \epsilon) $
      which implies
      \begin{equation}\label{fe1}
		g_n(x) \leq |f_n(x)| + |g_n(x) - f_n(x)|
            <  n(\Lambda(f) + \epsilon) + 2\epsilon \leq n(\Lambda(f) + 2\epsilon).
      \end{equation}

Note that $J_n^c \cup D_n^c$ is a closed set of $\mu$ 
      measure less than $2\epsilon$. We will now follow the idea in the
      Weiss-Katznelson proof of Kingman's theorem \cite{KW}, adapting it to the setting with
      discontinuities.
      By regularity of the Borel measure, there is an open set
      $D$ containing $J_n^c \cup D_n^c$ of measure less than $3\epsilon$,
      and by Urysohn's lemma there is a continuous function $0\leq
      h\leq 1$
      so that $h|_{J_n^c \cup D_n^c} = 1$ and $h|_{D^c} = 0$. Since
      $(X,T,\mu)$ is compact uniquely ergodic there exists some
      $M_1<\infty$ so that for $M>M_1$ and all $x$,
      $|\frac{1}{M}\sum_{i=1}^M h(T^ix) - \int h d\mu|< \epsilon$.
      For any $x\in X$ construct a sequence $(x_i)$ in $X$ in the following way. 
      For $i = 1$ let $x_1 = x$ 
      and for subsequent terms let $x_{i+1} = T^{n_i}x_i$;
      where $n_i$ is defined as
      \[ n_i = n_i(x) =\left\{
       \begin{array}{cl}n, &\textrm{ if } x_i\in J_n\cap D_n \\
             1, & \textrm{ otherwise.}  \end{array}\right.  \]

      We now consider the cocycles for a sufficiently large index.
      Let $M  > \max\{\frac{n}{\epsilon},M_1\}$, and choose $p$ so that
      \[n_1 +\cdots +n_{p-1} \leq M < n_1+\cdots+n_p. \]
      Let $K = M - \left(n_1 +\cdots+n_{p-1}\right)\leq n$.
      By subadditivity,
      \[ g_M(x) \leq 
         \sum^{p-1}_{i=1} g_{n_i}(x_i) + g_K(x_p)
          \leq \sum^{p-1}_{i=1} g_{n_i}(x_i) + n\|g_1\|_\infty. \]
     Partition the above sum into $x_i\in D_n\cap J_n$ and $x_i\in
     D_n^c\cup J_n^c.$ On the former set use (\ref{fe1}) and on the latter use the trivial bound $\|g_1\|_\infty$. 
      \be \label{fe2}   g_M(x) \leq  
        \sum^{p-1}_{i=1}\left[ n_i\left(\Lambda(f) 
                     + 2\epsilon\right){\bf 1}_{ J_n\cap D_n }(x_i)
                     + \|g_1\|_\infty \cdot{\bf 1}_{J_n^c \cup D_n^c}(x_i)\right] + n\|g_1\|_\infty.\ee
      
      Therefore, we have uniformly in $x$,
      \[ \sum^{p-1}_{i=1}
          \|g_1\|_\infty\cdot{\bf 1}_{J_n^c \cup D_n^c}(x_i) \leq
        \sum_{i=1}^M \|g_1\|_\infty\cdot{\bf 1}_{J_n^c \cup D_n^c}(T^ix) 
              \leq \sum_{i=1}^M \|g_1\|_\infty h\left(T^i(x)\right)  < 4\epsilon\|g_1\|_\infty M\]
      Substituting this into the sum on the right hand side of (\ref{fe2}), we find
      \begin{eqnarray}
         \frac{1}{M}g_M(x) &\leq&  \nonumber
         \frac{1}{M}\sum^{p-1}_{i=1} 
               n_i\left(\Lambda(f) + 2\epsilon\right){\bf 1}_{ J_n\cap D_n}(x_i)
             + \frac{1}{M}
               \sum_{i=1}^M \|g_1\|_\infty\cdot{\bf 1}_{ J_n^c\cup D_n^c }(T^ix) 
                + \frac{n}{M}\|g_1\|_\infty \\
             & \leq& \nonumber
               \left(\Lambda(f) + 2\epsilon\right) + 4\epsilon\|g_1\|_\infty + \frac{n}{M}\|g_1\|_\infty \\
             & \leq& \nonumber \Lambda(f) + 2\epsilon +  5\epsilon\|g_1\|_\infty.
      \end{eqnarray}
   \qed
  
   We now prove Corollary \ref{mc0} for almost continuous matrices.
   \begin{proof}.
Set $f_n(x) = \ln\|M_n(x)\|, $ $g_n = \ln\| \overline M_n \|.$ Then,
since $M,\overline{M}$ are in $SL_{2}(\bbC)$, we have 
  \[    \left| \ln\|M_n(x)\| - \ln\|\overline M_n(x)\| \right| 
             \leq  \left| \|M_n(x)\| - \|\overline M_n(x)\|\right| 
             \leq  \|M_n(x) - \overline M_n(x)\|   \] 
Thus for $\delta,\sigma>0$ 
 there exists $\rho>0$ so that $\|M-\overline M\|_\infty <\rho$
 implies $d_\delta(\{f\},\{g\})<\sigma$. Therefore for $\|M-\overline
 M\|_\infty <\rho$ we have
 $\{g\}\in\mathcal{N}_{\delta;\sigma}(\{f\})\cap \mathbb{B}_\infty$
so we are in a
 position to apply Theorem \ref{pwfurman} which then yields that for
 $\epsilon > 0$  there
 exists $n_\epsilon<\infty$ so that 
for $n > n_\epsilon$,  for any $x\in X$, 
    \be\label{furman1}   \|\overline M_n(x) \| < \exp\left\{  n(  \ly + \epsilon Q ) \right\} \ee
where $ Q = \max\{1 ,\ln\|\overline M\|_\infty\} $
and $\ly =\Lambda(\{f\}).$

    We have 
        \[
           \| M_k(\theta) - \overline M_k(\theta) \| \leq
            \sum_{0\leq \ell\leq k-1}
            \| \overline M_\ell(R^{k-\ell} \theta)(\overline M-M)(R^{k-1-\ell}\theta)M_{k-1-\ell}(\theta)\|   \] 

Thus
         \be \label{mcabound}   \| M_k(\theta) - \overline M_k(\theta) \| \leq
        \sup_{0\leq \ell\leq k-1}\left\{  \|(\overline M-M)(R^{k-1-\ell} \theta)\| \right\}
         \sum_{0\leq \ell\leq k-1}
        \| \overline M_\ell (R^{k-\ell}\theta)\| \|M_{k-1-\ell}(\theta)\|.\ee
   Let $k > 2n_\epsilon$. Then we can separate the above sum into $[0,n_\epsilon -1]$, $[n_\epsilon,k -1 -n_\epsilon]$, $[k - n_\epsilon, k-1]$,
   On the second two intervals $\ell \geq n_\epsilon,$ and on the first two intervals $k-1-\ell \geq n_\epsilon$
   so we can apply (\ref{furman1}) to $\overline M_\ell$ and $M_{k-1-\ell}$ respectively. 
   \beq
        \sum_{k-n_\epsilon\leq \ell\leq k-1}
        \| \overline M_\ell (R^{k-\ell}\theta)\| \|M_{k-1-\ell}(\theta)\|
        &\leq&\nonumber
        \sum_{k- n_\epsilon \leq \ell\leq k-1} \|M\|_\infty^{k-1-\ell} \exp\{(k-1-\ell)(\ly + \epsilon Q)  \} \\
        &\leq&\nonumber
        n_\epsilon \|M\|_\infty^{n_\epsilon }  \exp\{(k-1)(\ly + \epsilon Q  )\} 
   \eeq
   Similarly, for $\ell \in [0,n_\epsilon -1]$  
   \beq   \sum_{0\leq \ell\leq n_\epsilon -1}
        \| \overline M_\ell (R^{k-\ell}\theta)\| \|M_{k-1-\ell}(\theta)\|  
                &\leq&\nonumber   \sum_{0\leq \ell\leq n_\epsilon -1} \|\overline M\|_\infty^\ell \exp\{(k-1-\ell)(\ly + \epsilon   Q)\}    \\
               &\leq&\nonumber   n_\epsilon e^{n_\epsilon Q} \exp\{(k-1 
)(\ly + \epsilon   Q\}    
   \eeq
   On the center segment $ \ell \in [n_\epsilon,k -1 -n_\epsilon]$  both cocycles approach the upper Lyapunov limit, 
  so we have using (\ref{furman1})
   \beq
         \sum_{n_\epsilon \leq \ell \leq k-1-n_\epsilon} 
        \| \overline M_\ell (R^{k-\ell}\theta)\| \|M_{k-1-\ell}(\theta)\| & \leq & \nonumber
            \sum_{n_\epsilon \leq \ell \leq k-1-n_\epsilon} \exp\{(k-1)(\ly + \epsilon Q   )\} \\
          &\leq & \nonumber   (k-2n_\epsilon)  \exp\{(k-1)(\ly + \epsilon Q   )\} 
      \eeq 
     Thus, there is some $K < \infty $ so that for $k > K$,
       \[          \sum_{0\leq \ell\leq k-1}
        \| \overline M_\ell (R^{k-\ell}\theta)\| \|M_{k-1-\ell}(\theta)\|   < \exp\left\{  k(  \ly + 2\epsilon Q  ) \right\}   \]
     which together with (\ref{mcabound}) implies the result. \qed
   \end{proof}

  Finally, an immediate corollary is

  \begin{lemma}\label{cuLyapunov}
     For $\ly$ continuous on a compact set $K\subset\bbC$
     given $\epsilon > 0$ there is a $k_\epsilon < \infty$ so that $k > k_\epsilon$
     implies, for $z\in K$ and $\theta\in\bbT$, 
     \[ \|A^{f,z}_k(\theta)\| \leq e^{k(\ly(z) + \epsilon)}.\]
  \end{lemma}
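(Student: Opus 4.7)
The plan is to combine the uniform-in-$\theta$ bound from Corollary \ref{pwfurmana} with continuity in the spectral parameter $z$ (controlled via Corollary \ref{mc}) and the continuity of $\ly$ on $K$, then invoke compactness.

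First, I observe that all hypotheses needed to apply these corollaries to $A^z$ are in place. The matrix $A^z(\theta)$ has entries uniformly bounded by $|z|+\|f\|_\infty+2$, so $Q_K := \max\{1,\sup_{z\in K,\,\theta\in\bbT}\ln\|A^z(\theta)\|\}$ is finite. Since $\det A^z(\theta)=1$, we have $\|A_n^z(\theta)\|\geq 1$, giving the uniform lower bound $\eta=1$ required in Corollary \ref{mc}. Moreover, for fixed $z$ the subadditive cocycle $\theta\mapsto \ln\|A_n^z(\theta)\|$ is almost continuous: its discontinuity set in $\theta$ is contained in the finite set $\bigcup_{j=0}^{n-1}(J_f - j\omega)$.

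Fix $z\in K$ and set $\epsilon' := \epsilon/(3Q_K)$. Corollary \ref{pwfurmana} gives some $k_z^{(1)}$ with $\|A_k^z(\theta)\|\leq e^{k(\ly(z)+\epsilon')}$ for $k>k_z^{(1)}$, uniformly in $\theta$. Since $\|A^{z'}-A^z\|_\infty = |z'-z|$, Corollary \ref{mc} applied with approximation parameter $\epsilon'$ yields $\rho_z>0$ and $k_z^{(2)}<\infty$ so that for $|z'-z|<\rho_z$ and $k>k_z^{(2)}$,
\[ \|A_k^{z'}(\theta)-A_k^z(\theta)\|\leq |z'-z|\,e^{k(\ly(z)+\epsilon' Q_K)}\leq \rho_z\,e^{k(\ly(z)+\epsilon/3)}. \]
Adding the two bounds and absorbing the prefactor $1+\rho_z$ into the exponent (possible for $k$ larger than $6\log(1+\rho_z)/\epsilon$) yields $\|A_k^{z'}(\theta)\|\leq e^{k(\ly(z)+\epsilon/2)}$. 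Using continuity of $\ly$ on $K$, I shrink $\rho_z$ further if needed so that $|z'-z|<\rho_z$ forces $\ly(z)\leq \ly(z')+\epsilon/2$, giving the target bound $\|A_k^{z'}(\theta)\|\leq e^{k(\ly(z')+\epsilon)}$ on $B(z,\rho_z)\cap K$.

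Finally, compactness of $K$ produces a finite subcover $\{B(z_i,\rho_{z_i})\}_{i=1}^N$, and $k_\epsilon$ can be taken as the maximum over $i$ of the corresponding thresholds. The main point is not a deep obstacle but careful bookkeeping of the $\epsilon$-scales: the choice $\epsilon'=\epsilon/(3Q_K)$ is dictated by the factor $\max\{1,\ln\|A^z\|_\infty,\ln\|A^{z'}\|_\infty\}$ that appears in Corollary \ref{mc}, and uniformity of $Q_K$ across $K$, guaranteed by compactness, is what allows a single choice of $\epsilon'$ to work simultaneously for every point of the finite subcover.
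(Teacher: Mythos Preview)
Your proof is correct and follows essentially the same route as the paper, which simply records ``Follows immediately by compactness and Corollary~\ref{mc}.'' You have written out the details: a pointwise Furman bound at a fixed $z$ (your use of Corollary~\ref{pwfurmana} is the $\overline M=M$ instance already contained in the proof of Corollary~\ref{mc}), stability under perturbation in $z$ via Corollary~\ref{mc}, continuity of $\ly$ to swap $\ly(z)$ for $\ly(z')$, and a finite subcover. One cosmetic point: your $Q_K$ is defined only over $z\in K$, while $z'$ may lie slightly outside $K$; either enlarge $K$ by a fixed margin in the definition of $Q_K$, or note that in the final statement only $z'\in K$ matters so you may restrict to such $z'$ throughout.
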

  \begin{proof} Follows immediately by compactness and Corollary \ref{mc}. 
    \qed
  \end{proof}

   \section{Proof of the main Lemmas}\label{SA}

   We will first use Lemma \ref{SOlemma} to obtain Proposition \ref{polygrow}.
   \begin{proof} 
      Fix $f\in\textrm{PL}_\gamma(\bbT)$, $E \in U $,
      $ \delta \geq 1$ and $1 > \gamma > 0$ and $\theta \in \bbT$.
      Let $q_n$ be the sequence of denominators of the continued fraction approximants of $\omega$.
        Boundedness of the Lyapunov exponent on compact sets in $\bbC$
      follows from upper semicontinuity, so we may define
       \[\bar{\chi} = \sup\{\ly(z)\in\bbC:
           |\Re (z)|\leq K;  |\Im (z)|\leq 1 \}.\]
 
      We consider arbitrary irrationals,
      and make a separate argument for the Diophantine case at the end.
      If $\omega$ is Diophantine let $\xi = 1+2\kappa$ where $\kappa > 0$ is
      as described in (\ref{DC}), otherwise, let $\xi  = 1$.
      Let $1>\tau>0$ be such that
      \be \label{PGpars} \frac{\tau}{1-\tau} < \frac{\gamma\zeta\chi}{\delta\xi\bar\chi}  
      \ee
      and choose $\sigma$ so that \[  \frac{\chi \zeta }{\bar\chi\delta\xi}(1-\tau) > \sigma > \tau /\gamma.\]

      Then from Lemma \ref{SOlemma} for $k_\tau(E) < k < \frac{1}{\sigma\bar\chi}\ln q_n$
      there is some $0\leq j \leq q_n + q_{n-1} - 1$ so that for $|E-z| < e^{-\tau\bar\chi k}$
      \be \label{growexp}
       \|A^{f,z}_{k}(\theta+j\omega)\| \geq \exp\{(1-\tau)k\ly\}.  
       \ee
      Fix $k = k(n) = \lfloor \frac{1}{\sigma \bar \chi} \ln q_n  - 1 \rfloor$.
       By definition,
       \[ A^z_{k+j}(\theta) = A^z_k(\theta+j\omega)A^z_j(\theta)  \]
      and, as $A^{f,z}_{k}$ is an $\slr$-cocycle, we have
      \be\label{slrmax}
                \max_j\left\{  \left\|A^{f,z}_j(\theta)\right\|,\left\|A_{j+k}^{f,z}(\theta)\right\|\right \}  \geq\exp\left\{\frac12(1-\tau)k\ly\right\}
      \ee
      for $|z-E| < e^{-\tau k\ly(E)}$.
      By (\ref{PGpars}) we can choose $t$ so that
      \begin{equation}
       \frac{\sigma\bar\chi}{\zeta} < t< \frac{(1-\tau)\chi}{\delta\xi}.\label{parb5}
      \end{equation}
      Finally, let $M_k = e^{t k}$. The first inequality in
      (\ref{parb5}) and the 
      choice of $k$ implies
       that for sufficiently large $n$,
      $M_k^\zeta\geq q_n + q_{n-1} -1 +k$. By  (\ref{parb5}),
      $M_k^{-\zeta} < e^{-\tau k \bar\chi}$ so 
      we have, for $|z-E| < M_k^{-\zeta}$  
      \begin{equation}
        \max_{1\leq j \leq M_k^\zeta} \left\|A_j^{f,z}(\theta)\right\|^2 \geq 
        e^{(1-\tau)k\chi} = M_k^{\frac{(1-\tau)\chi}{t}} 
         > M_k^\delta.
      \end{equation}
      Let $T_n = M_{k(n)}$.
      Then, for every $E \in U$, there is some $n_E$ so that for $n > n_E$ 
      (\ref{maxcocycle1}) holds.
      This settles the general case.
      
      If $\ly$ is continuous, and $U$ is compact then, by Lemma \ref{SOlemma}
      $k_\tau(E)$ is uniform for all  $E \in U$, and therefore 
       $n_E = n_0$ is also uniform for $E \in U$.

      For the Diophantine case for
      sufficiently large $T>0$, let $k = t^{-1}\ln T$.
      For large $n$, $q_{n+1} < q_n^{1+\kappa}$
      so there exists $q_n$ so that $ T^{\frac{\zeta}{1+2\kappa}}      < 2q_n+k <  T^{\zeta} $.
      Let $M_k $ be chosen so that 
      \[T^{\frac{\zeta}{1+2\kappa}}<M_k^{\frac{\zeta}{1+\kappa}} 
             < 2q_n +k< M_k^\zeta \leq T^{\zeta}. \]
       By construction and (\ref{PGpars}),
      $\delta ({1+2\kappa}) < (1-\tau)\frac{\chi}{t}$.
      It follows that, given $E \in U$ 
      there is a $T_E < \infty$ so that for $T > T_E$
      and $|z - E| \leq  T^{-\zeta}$  
      \begin{equation}
	    \max_{1\leq j \leq T^\zeta} \|A_j(\theta,z)\|^2 \geq 
		   \max_{1\leq j \leq M_k^\zeta} \|A_j(\theta,z)\|^2 \geq
           e^{(1-\tau)k\chi}\geq M_k^{\frac{(1-\tau)\chi}{t}} 
           > M_k^{\delta({1+2\kappa})}> T^\delta.
      \end{equation}
      The second inequality is simply (\ref{growexp}), and the remaining 
      inequalities follow from parameter choices.
 
     Again, if $\ly$ is continuous and $U$ is compact, 
     by Lemma \ref{SOlemma} 
      $T_E = T_0$ can be chosen uniform for $E \in U$.
 
      To complete the proof, it remains to show the
      transfer matrices grow on comparable lengths
      in the positive and negative directions.
      Note that for an ergodic invertible cocycle, the Lyapunov exponent of the forward cocycles 
      equals the Lyapunov exponent of the backward cocycles.
      Moreover, if $A_k^\omega$ is the
      cocycle over rotations by $\omega$, then the relation 
      $A_{-k}^{z,\omega}(\theta) = A_{k}^{z,-\omega}(\theta + \omega)$
      holds.
      Since $\omega$ and $-\omega$ have the same sequence of
      denominators $q_n$ from the continued fraction approximants, 
      we have that for $k$ large, $M_k$ may be 
      chosen exactly the same
      for $A^z_{k}$ and $A^z_{-k}$.\qed
      \end{proof}

   We will  obtain approximating polynomials for the rough potentials using
   Fejer's summability kernel
       \begin{equation}\label{fejer}
          K_N(\theta) = \frac{1}{N+1} \left(
           \frac{\sin\left(\frac{N+1}{2}\theta\right)}{\sin\left(\frac{1}{2}\theta\right)}
           \right)^2 
           = \sum_{-N\leq j\leq N} \left(1 - \frac{j}{N+1}\right)e^{ij\theta}. 
       \end{equation}
  Let $ \hat f(i)$ be the Fourier coefficients of $f$.  We have
  \[\sigma_N(f)(\theta) := K_N * f(\theta) = 
           \sum_{-N\leq j\leq N} 
           \left(1 - \frac{|j|}{N+1}\right)\hat{f}(j)e^{ij\theta}, \]
    is a $2N+1$st degree trigonometric polynomial.
   Moreover,   from the general theory, for $f\in L^1(\bbT)$, $\sigma_n(f)\to f$ in 
   $L^1(\bbT)$. The following is another standard result on the pointwise
   rate of convergence at well behaved points.

   The $\gamma$-Lipschitz function space $L_\gamma(\bbT)$ is 
   defined as the set functions on $\bbT$ with the norm,
   \[ \| f\|_{ L_\gamma} = \|f\|_\infty +
      \sup_{t\in\bbT; |h| > 0}\frac{|f(t+h) - f(t)|}{|h|^\gamma}. \] 
\begin{lemma}\footnote{This is a formulation of a standard result from harmonic analysis, see for example chapter 1 of \cite{K}.}\label{fla}
      Suppose $f\in L_\gamma(\bbT)$ 
      and for $\theta\in\bbT$ and $n\in\bbN$ we have
      \[ \left| K_n*f(\theta) - f(\theta)\right| < K \|f\|_{L_\gamma} n^{-\gamma} \]
      where $K$ does not depend on $n$.
   \end{lemma}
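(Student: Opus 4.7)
The plan is to reduce everything to the standard two--bound, split--integral technique for Fej\'er means. I would first exploit the fact that the Fej\'er kernel is a positive summability kernel with $\int K_n = 1$, so
\[
K_n*f(\theta) - f(\theta) = \int_{-\pi}^{\pi} K_n(t)\bigl(f(\theta-t) - f(\theta)\bigr)\,\frac{dt}{2\pi},
\]
and then apply the hypothesis $f\in L_\gamma(\bbT)$ to obtain the pointwise modulus bound $|f(\theta-t) - f(\theta)| \leq \|f\|_{L_\gamma}|t|^\gamma$, valid uniformly in $\theta$. This reduces the question to estimating $\int K_n(t)|t|^\gamma\,dt$, which is a $\theta$--independent quantity and explains why the final bound does not depend on $\theta$.

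Next I would record the two standard pointwise bounds on the Fej\'er kernel that follow immediately from its closed form (\ref{fejer}): the trivial maximum bound $K_n(t)\leq n+1$, and the tail decay bound $K_n(t)\leq C/((n+1)t^2)$ for $|t|\geq 1/(n+1)$ (obtained from $|\sin(t/2)|\geq |t|/\pi$ on $[-\pi,\pi]$). I would then split the integral at the natural scale $|t|=1/(n+1)$. On the inner piece, rather than using the maximum bound directly it is even cleaner to note $|t|^\gamma \leq (n+1)^{-\gamma}$ and pull it out of $\int K_n = 1$, giving a contribution of order $(n+1)^{-\gamma}$. On the outer piece, inserting the tail bound yields an integral of the form $\frac{C}{n+1}\int_{1/(n+1)}^{\pi} t^{\gamma-2}\,dt$, which evaluates to $O((n+1)^{-\gamma})$ since $\gamma<1$ makes the antiderivative $t^{\gamma-1}/(\gamma-1)$ dominated at the lower endpoint.

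Adding the two contributions yields $\int K_n(t)|t|^\gamma\,dt\leq C_\gamma (n+1)^{-\gamma}$, and combined with the first step this gives the claimed estimate with $K$ a constant depending only on $\gamma$. The main (mild) obstacle is book-keeping the constants so that one sees explicitly that $K$ depends neither on $n$ nor on $\theta$, and in particular that the lower endpoint $1/(n+1)$ in the outer integral is precisely what balances the two regimes and produces the exponent $-\gamma$ rather than $-\gamma(1-\gamma)$ or $-(1-\gamma)$. No delicate analysis of the discontinuity structure of $f$ enters: the hypothesis $f\in L_\gamma(\bbT)$ gives the H\"older inequality at \emph{every} point, so the estimate is uniform in $\theta$. (The piecewise case treated elsewhere in the paper requires a separate localized argument and is not part of this lemma.)
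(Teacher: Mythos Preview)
Your proposal is correct and matches the paper's proof essentially line for line: the paper also writes $K_n*f-f$ as an integral against the Fej\'er kernel, invokes the H\"older bound $|f(\theta-\tau)-f(\theta)|\le C|\tau|^\gamma$, records the two-sided bound $K_n(\tau)\le\min\{n+1,\pi^2/((n+1)\tau^2)\}$, and splits the integral at $\sigma=1/(n+1)$. The only cosmetic difference is that on the inner piece the paper uses the crude bound $K_n\le n+1$ and integrates $\tau^\gamma$, whereas you pull out $|t|^\gamma\le(n+1)^{-\gamma}$ and use $\int K_n=1$; both give the same $O(n^{-\gamma})$.
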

   \begin{proof}
   Observe $K_n$ has the following property,
   \be |K_n(\theta)| \leq \min\left\{n+1,\frac{\pi^2}{(n+1)\theta^2} \right\}. \label{FKbd}\ee
      Assume $f$ is $\gamma$-Lipschitz on $\bbT$ with constant $C$, then using (\ref{FKbd}) and $\sigma = \frac{1}{(n+1)}$
      \begin{eqnarray}
         \left| K_n*f(\theta) - f(\theta)\right| &=&\nonumber
              \left|\int_\bbT K_n(\tau)(f(\tau - \theta) - f(\theta))\de\tau \right| \\
         & \leq &\nonumber  \int_{[0,\pi]} \left|K_n(\tau)\right| 2C\tau^\gamma\de\tau  \\ 
         & \leq &\nonumber 2C \int_{[0,\sigma)}(n+1)\tau^\gamma\de\tau
             + 2C\int_{[\sigma, \pi)}\frac{\pi}{n+1}\tau^{\gamma - 2}\de\tau \\
         & \leq &\nonumber  CK'_\gamma n^{-\gamma}
      \end{eqnarray}
      Here $K$ does not depend on $f$ or $n$, and $C$ is the Lipschitz constant at $\theta$.\qed\end{proof}
   Let $\bbI$ be the set of intervals in $\bbT.$ 
   We say $f\in \bbI * L_\gamma(\bbT)$, if for $f_i \in L_\gamma(\bbT)$  and $I_i \in \bbI $ for $i = 1,\ldots,r$.
   \[   f(\theta) = \sum_{i=1}^r  1_{I_i} f_i(\theta).  \]
   \begin{lemma}
    \[  \bbI*L_\gamma(\bbT) = PL_\gamma(\bbT)\]
   \end{lemma}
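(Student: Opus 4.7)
The plan is to prove both inclusions, with essentially all the work in the reverse direction.

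For the forward inclusion $\bbI * L_\gamma(\bbT) \subseteq PL_\gamma(\bbT)$: given $f = \sum_{i=1}^r \mathbf{1}_{I_i} f_i$, the discontinuity set $J_f$ lies in the finite set $J$ of endpoints of the $I_i$. For $t$ with $\mathrm{dist}(t, J_f) > h > 0$ the segment $[t, t+h]$ avoids $J_f$, so $f$ is continuous on $[t, t+h]$. Splitting
\[ f(t+h) - f(t) = \sum_i \mathbf{1}_{I_i}(t)(f_i(t+h) - f_i(t)) + \sum_i \bigl(\mathbf{1}_{I_i}(t+h) - \mathbf{1}_{I_i}(t)\bigr) f_i(t+h), \]
the first piece is bounded by $\left(\sum_i \|f_i\|_{L_\gamma}\right) h^\gamma$. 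The second, grouped by endpoints $e \in J$ inside $(t, t+h]$ (each outside $J_f$, hence a continuity point of $f$), is handled by substituting $f_i(t+h) = f_i(e) + O(\|f_i\|_{L_\gamma} h^\gamma)$: the $f_i(e)$-contributions form precisely the (vanishing) jump of $f$ at $e$, leaving an $O(h^\gamma)$ error with a constant depending only on $|J|$ and $\sum_i \|f_i\|_{L_\gamma}$. Combined with $\|f\|_\infty \leq \sum_i \|f_i\|_\infty$, this yields $f \in PL_\gamma(\bbT)$.

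For the reverse inclusion $PL_\gamma(\bbT) \subseteq \bbI * L_\gamma(\bbT)$: let $J_f = \{j_1 < \ldots < j_r\}$. The crucial step is to show $f$ restricted to each open component $(j_i, j_{i+1})$ of $\bbT \setminus J_f$ is uniformly $\gamma$-H\"older. Given $t_1 < t_2$ in $(j_i, j_{i+1})$ with $h = t_2 - t_1$, the $PL_\gamma$ inequality applies directly when $\mathrm{dist}(t_1, J_f) > h$. In the complementary case --- WLOG $t_1 - j_i \leq h$ and both $t_1, t_2$ in the left half of the component, the remaining cases being symmetric or immediately reducible --- run the dyadic chain $u_0 = t_2$, $u_{k+1} = j_i + \tfrac{2}{3}(u_k - j_i)$. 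Then the step $u_k - u_{k+1} = \tfrac{1}{3}(u_k - j_i)$ while $\mathrm{dist}(u_{k+1}, J_f) = \tfrac{2}{3}(u_k - j_i)$ is exactly twice the step, so the $PL_\gamma$ bound applies at every rung; since $(2/3)^\gamma < 1$ the telescoping H\"older increments form a convergent geometric series, producing $|f(t_2) - f(j_i^+)| \leq C(t_2 - j_i)^\gamma$ with $f(j_i^+) := \lim_{u \downarrow j_i} f(u)$ existing by the Cauchy property of $(f(u_k))$. The analogous bound at $t_1$ combined with the triangle inequality gives $|f(t_2) - f(t_1)| \leq C' h^\gamma$.

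With each $f|_{(j_i, j_{i+1})}$ extending $\gamma$-H\"older to $[j_i, j_{i+1}]$, I then extend each to $\tilde f_i \in L_\gamma(\bbT)$ by the standard McShane--Whitney construction in the snowflake metric $d(\cdot,\cdot)^\gamma$. Treating singletons $\{j_k\}$ as degenerate intervals in $\bbI$,
\[ f = \sum_{i=1}^r \mathbf{1}_{(j_i, j_{i+1})} \tilde f_i + \sum_{k=1}^r f(j_k) \mathbf{1}_{\{j_k\}}, \]
exhibits $f$ as an element of $\bbI * L_\gamma(\bbT)$. The main obstacle is precisely the dyadic chain that upgrades the conditional H\"older condition (valid only when $\mathrm{dist}(t, J_f) > h$) to a uniform H\"older estimate on each open component; the expansion ratio must be chosen strictly between $1$ and $2$ (any $\rho > 1$ yields summability, while $\rho < 2$ preserves the safety factor in the $\mathrm{dist}(t, J_f) > h$ hypothesis), and $\rho = 3/2$ above satisfies both comfortably.
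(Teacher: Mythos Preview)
Your proof is correct and follows the same overall structure as the paper's: both inclusions, with the substantive work in $PL_\gamma(\bbT)\subseteq \bbI*L_\gamma(\bbT)$, carried out by extending $f$ from each component of $\bbT\setminus J_f$ to a global $\gamma$-H\"older function. The paper dismisses the forward inclusion as ``clear'' and, for the reverse, simply asserts that ``the Lipschitz conditions ensure'' the one-sided limits $f(a_i+0),f(b_i-0)$ exist and that the linear interpolant on $I_i^c$ yields an element of $L_\gamma(\bbT)$.

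The genuine addition in your argument is the dyadic chain: you correctly recognize that the defining $PL_\gamma$ inequality only applies when $\mathrm{dist}(t,J_f)>|h|$, so neither the existence of one-sided limits at the endpoints nor the uniform $\gamma$-H\"older estimate on each component $(j_i,j_{i+1})$ is immediate; your geometric chain with ratio $2/3$ supplies exactly what the paper leaves implicit. Your McShane--Whitney extension and the singleton correction at the points of $J_f$ are cosmetic variants of the paper's linear interpolation (the paper's formula $f=\sum 1_{I_i}f_i$ in fact vanishes on $J_f$, so your singleton terms patch a small oversight there). In short: same route, but you have made rigorous the one step the paper waves through.
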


   \begin{proof}
   One inclusion is clear. Suppose $f\in PL_\gamma(\bbT)$ where $f$ is continuous on $\bbT\backslash J_f$ for $\infty > |J_f| \geq 2$ 
   (if $f$ is continous everywhere there is nothing to show, if $f$ is discontinuous at only one point $x$ add $x+\pi$ to $J_f$).
   Let $I_i = (a_i,b_i)$ for $1\leq i \leq |J_f|$ be largest intervals in $\bbT\backslash J_f$ so that $\cup_i I_i = \bbT\backslash J_f$.
   The Lipschitz conditions ensure that
   limits $\lim_{\epsilon\to 0^+} f(a_i + \epsilon) = f(a_i+0) $  and $\lim_{\epsilon\to 0^+} f(b_i - \epsilon) =f(b_i-0) $ exist. 
   Now define $f_i$ to be equal to $f$ on $I_i$ and linearly interpolate the points $(b_i,f(b_i-0))$ and $(a_i,f(a_i+0))$
    on $I_i^c$, which clearly defines a $\gamma$-Lipschitz function.
   Then $f =  \sum_{i=1}^r  1_{I_i} f_i(\theta)$ so $f \in \bbI * L_\gamma(\bbT)$.\qed
   \end{proof}

  We will now show uniform upper bounds for cocycles in a neighborhood of $f
  \in  \bbI * L_\gamma(\bbT) .$ 

  \begin{lemma}\label{SOlemma2}
     Suppose $f\in  PL_\gamma(\bbT)$, and $E\in\bbC$ so that $\ly(E)>0$.
     For any $0<\tau<\|f\|_\infty^{-1}$ there exists a $k_\tau  =  k_{\tau}(E)< \infty$  so that if $q_n > e^{k_\tau\tau\ly(E)/\gamma}$ then for any $k\in\bbZ^+$ such that
     $k_\tau<k<\frac{\gamma}{\tau\ly(E)}\ln q_{n}$
    and 
     any $\theta \in \bbT$
     there is some $0<x\leq q_n + q_{n - 1} - 1$  so that for
     $z\in\bbC$ with $|z-E|<\exp\{-\tau k\ly(E)\}$ and $g\in
     \mathbb{B}_\infty(\bbT)$ with  $\|g-f\|_\infty <e^{-\tau
       k\ly(E)},\; \|g\|_\infty<\tau^{-1},$ we have
     \[ \left\|A_k^{g,z}\left(R^x\theta\right)\right\|\geq e^{k(1-\tau)\ly(E)}. \]
   \end{lemma}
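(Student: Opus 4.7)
The plan is to extend the method of \cite{JMv12} to the piecewise H\"older setting by coupling the decomposition $PL_\gamma = \bbI * L_\gamma$ with the uniform upper bound for almost-continuous cocycles from Theorem \ref{pwfurman} and Corollary \ref{mc}.

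First I would construct a piecewise trigonometric polynomial approximant of $f$: writing $f = \sum_i 1_{I_i} f_i$ with $f_i \in L_\gamma(\bbT)$, set $\tilde f_N = \sum_i 1_{I_i}\sigma_N(f_i)$ where $\sigma_N$ is the Fej\'er kernel. Lemma \ref{fla} gives $\|f - \tilde f_N\|_\infty \leq C N^{-\gamma}$. Choose $N$ just above $e^{\tau k \ly(E)/\gamma}$, so $C N^{-\gamma} < e^{-\tau k \ly(E)}$; the hypothesis $k < \frac{\gamma}{\tau\ly(E)}\ln q_n$ keeps $N < q_n$. By Corollary \ref{mc} applied to the almost continuous matrix $A^{\tilde f_N, E}$ and its perturbation $A^{g,z}$, for any admissible $g, z$,
\[ \|A_k^{g,z}(\theta) - A_k^{\tilde f_N, E}(\theta)\| \leq C\, e^{-\tau k\ly(E)}\, e^{k(\ly(E) + \eta)} < e^{k(1-\tau/2)\ly(E)}, \]
so it suffices to produce $x$ with $\|A_k^{\tilde f_N, E}(R^x\theta)\| \geq e^{k(1-\tau/2)\ly(E)}$.

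For this, I would run a Markov-type argument on the piecewise analytic cocycle $A_k^{\tilde f_N, E}$. The inf definition of $\ly$ yields $\frac{1}{k}\int \ln\|A_k^{\tilde f_N, E}\|\,d\theta \geq \ly(\tilde f_N, E)$, while Corollary \ref{pwfurmana} yields the uniform upper bound $\frac{1}{k}\ln\|A_k^{\tilde f_N, E}(\theta)\| \leq \ly(\tilde f_N, E) + \eta$ for all $\theta$ and all $k$ sufficiently large. Hence the good set $G = \{\theta : \frac{1}{k}\ln\|A_k^{\tilde f_N, E}(\theta)\| \geq \ly(\tilde f_N, E) - \tau\ly(E)/8\}$ has Lebesgue measure close to $1$. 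A second use of Corollary \ref{mc} also gives $|\ly(\tilde f_N, E) - \ly(f,E)| \leq \tau\ly(E)/16$ for $N$ large, so membership in $G$ delivers the desired growth $(1-\tau/2)\ly(E)$. To find $x \leq q_n + q_{n-1} - 1$ with $R^x\theta \in G$, I would invoke a Denjoy--Koksma estimate: on each continuity interval of $\tilde f_N$ the cocycle $A_k^{\tilde f_N, E}$ is piecewise analytic, so the total variation of $1_{G^c}$ is polynomial in $kN$ and hence controlled by $q_n$ up to logarithmic factors, matching the orbit length.

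\textbf{Main obstacle.} The principal difficulty is balancing the two approximation scales: $\tilde f_N$ must be close enough to $f$ in $L^\infty$ to absorb a perturbation of size $e^{-\tau k \ly(E)}$ (forcing $N$ on the order of $e^{\tau k \ly(E)/\gamma}$), yet the combinatorial complexity of $G$ is governed by $kN$, which must stay below $q_n$ for the Denjoy--Koksma step to guarantee an orbit point in $G$. The quantitative hypothesis $k < \frac{\gamma}{\tau\ly(E)}\ln q_n$ is exactly what keeps the two constraints compatible. Theorem \ref{pwfurman} and Corollary \ref{mc} are essential precisely because they allow $\tilde f_N$ to share the discontinuity set $J_f$ of $f$: any everywhere-continuous approximant would introduce an $L^\infty$ error of order one at the jumps and destroy the sharp $N^{-\gamma}$ rate on which the coupling rests.
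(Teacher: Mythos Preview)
Your strategy matches the paper's---piecewise Fej\'er approximant $\tilde f_N = \sum 1_{I_i}\sigma_N(f_i)$, perturbation control via Corollary~\ref{mc}, a Chebyshev bound on the good set using the uniform upper bound of Section~\ref{UU}, the polynomial structure of $\|A_k^{\tilde f_N,E}\|^2$ to count components, and an orbit-hitting step (your Denjoy--Koksma is equivalent to the paper's Lemma~\ref{JitoLemma})---but two details would fail as written. You run the Markov argument on $A^{\tilde f_N,E}$ with $N = N(k)$, invoking Corollary~\ref{pwfurmana} for the upper bound and then asserting $|\ly(\tilde f_N,E) - \ly(f,E)| \leq \tau\ly(E)/16$ ``from Corollary~\ref{mc}''. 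The upper bound from Corollary~\ref{pwfurmana} is circular here, since its threshold $K$ depends on the cocycle $\tilde f_N$ and hence on $k$; and Corollary~\ref{mc} controls finite products, not the limit, so it yields only upper semicontinuity of $\ly$, not the two-sided estimate you need (continuity of $\ly$ is precisely what is \emph{not} assumed in this lemma). With $N=N(k)$ the symbol $\ly(\tilde f_N,E)$ is not even a fixed number to compare against. The paper avoids all of this by running the Markov argument on $A^{f,E}$ itself (independent of $N$ and $k$, so Corollary~\ref{pwfurmana} applies once with a fixed threshold), obtaining $|V_k(a\ly(E),A^{f,E})| \geq \tfrac12$, and only then passing to $A^{f_N,E}$ and $A^{g,z}$ through the inclusions $V_k(a\ly,A^{f,E}) \subset V_k(b\ly,A^{f_N,E}) \subset V_k(c\ly,A^{g,z})$, with Corollary~\ref{mc} used solely on finite products.

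Second, your choice of $N$ ``just above $e^{\tau k\ly(E)/\gamma}$'' leaves no slack: the good set is a union of at most $O(k^2 N|J_f|)$ intervals, and your Denjoy--Koksma step needs that count below $q_n$, but with $N \approx e^{\tau k\ly/\gamma}$ and only $e^{\tau k\ly/\gamma} < q_n$ assumed, the extra $k^2$ factor spoils it. The paper instead takes $N = e^{\nu k\ly/\gamma}$ with $\tau/4 < \nu < \tau/2$; this is still large enough that $C N^{-\gamma} e^{k(\ly+\bar\eta)} < e^{ck\ly}$ for $c>1-\tau/8$ (one does not actually need $N^{-\gamma} < e^{-\tau k\ly}$), yet small enough that the resulting interval length $\frac{1}{4k^2(2N+1)|J_f|}$ exceeds $e^{-\tau k\ly/\gamma} > 1/q_n$ for large $k$.
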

   Lemma \ref{SOlemma} follows immediately from Lemma \ref{SOlemma2} with $g = f$.
   \begin{proof}
        It is clearly enough to prove the Lemma for $\tau<1.$ To begin we first fix some parameters for the proof.
         Let  
         \be \label{p4}
         \tau/2>\nu>\tau/4, \;\;\textrm{ and } \;\; 1-\tau/16>a>b>c>1-\tau/8  .
         \ee
         Finally, let $\bar\eta> 0$ be so small that $\bar\eta < \ly(E){\tau}/16$.

       Write $f = f_1 1_{I_1} +\cdots + f_r 1_{I_r}$ for Lipschitz functions $f_i\in L_\gamma(\bbT)$  and intervals $I_i$. 
       There is no loss of generality if we assume $r \geq 2$.
       Let $J(1_{I_i})$ be the set of  discontinuities  of $1_{I_i};$ 
       then the set of discontinuities of $f$ is $J_f = J(f) = \cup_{i=1}^r J(1_{I_i})$.
      In practice,  we will use a simple bound for the supremum norm of $f$
       \[  \|f\|_\infty \leq   \|f_1\|_\infty +\cdots +\|f_r\|_\infty=: M. \]
       Observe, for $h \in L_\infty(\bbT)$,
       we have 
 $\|K_N*h\|_\infty \leq \|h\|_\infty$.
      For $f\in  \bbI* L_\gamma(\bbT)$, we write $f_N = \sigma_N(f_1)1_{I_1}+ \cdots + \sigma_N(f_r) 1_{I_r}$,
       so we have $\|f_N\|_\infty \leq M$.
       It is clear that
      \[\|A^{h,E}\|_\infty \leq  1+  \|h\|_\infty+ |E|,\]
      so we easily have uniform bounds for the cocycle matrices over bounded energies and unifomly bounded potentials.

       Let $\epsilon >0$.
       There is some $\rho_\epsilon >0$ and $K_\epsilon<\infty$  (depending on $E$)
      so that for $k > K_\epsilon$, and $|z-E| + \|g-f\|_{\infty}< \rho <
      \rho_\epsilon$  we have, from Corollary \ref{mc} with $\eta =1$,   
      \be\|A^{f,E}_k(\theta)-{A}^{g,z}_k(\theta)\| <\label{mc1}
           \rho e^{k(\ly(E) + \epsilon \overline M)} \ee
       where  
$\overline M =\max\{1,\ln[ 1 + M + \rho_\epsilon +|E|]\}  $.

       In particular if $g = f_N$,  we have by Lemma \ref{fla}
       \be \label{fejerpotential}
         \|f_N(R^i\theta) - f(R^i\theta)\|_\infty < C_f N^{-\gamma}   ,
       \ee
       Set
       \[ N = \exp\left\{ \ly(E) k\frac{\nu}{\gamma} \right\}\]
       and let  $\epsilon < {\bar\eta}/{\overline M} $.
       Now there is some $K_\epsilon$ so that we can apply (\ref{mc1}), which we will call $M_\epsilon$.
       That is, we have
      $C_fN^{-\gamma}< \tfrac12\rho_\epsilon$  and for $|z-E|< C_fN^{-\gamma}$ we have 
    \be\|A^{f,E}_k(\theta)-{A}^{f_N,z}_k(\theta)\| <\label{mc2}
           C_f N^{-\gamma} e^{k(\ly(E) + \eta )}. \ee

       Let $A^{f_N,E}$ be the cocycle matrix defined by 
       the potential determined by the sampling function $f_N , $ 
        which is a piecewise polynomial on $|J_f|$ intervals, each supporting a continuous polynomial of order $(2N+1)$.

        For a map $B:\bbT\to SL_2(\bbR)$ and associated cocycle set
        \begin{equation}\label{vsets}
	    V_k\left(t,B\right) = 
		\left\{\theta\in \bbT :\frac{1}{k}\ln\|B_k(\theta)\|
                      > t\right\}\subset\bbT.
        \end{equation}
        The measure of this set for $B = A^{f,E}$ for large enough $k$ 
        can be bounded below using the fact that
        $\ly(E) = \inf_k \int_\bbT \frac{1}{k}\ln\|A^{f,E}_k(\theta)\|\de\theta$.
        Indeed, by Corollary \ref{pwfurmana} there is $k_{f,E }< \infty$ so that for $k > k_{f,E}$
         we have for all $\theta$,
        $\frac{1}{k}\ln\|A^{f,E}_k(\theta)\| < \ly(E)+\bar\eta$,
        thus,
        \begin{eqnarray}
	     \ly(E) &\leq& \nonumber \int_\bbT \frac{1}{k}\ln\|A^{f,E}_k(\theta)\|\de\theta \\
              & \leq &\nonumber
                  |V_k\left(a\ly(E),A^{f,E}\right) |(\ly(E)+\bar\eta)  + \left|V_k^c\left(a\ly(E),A^{f,E}\right)\right|a\ly(E)\\  
              & \leq &\nonumber
                  |V_k\left(a\ly(E),A^{f,E}\right) |  [(1-a)\ly(E)+\bar\eta] +a  \ly(E) .  
        \end{eqnarray}
        Note that if $U$ is compact and $\ly$ is continuous then, by Lemma \ref{cuLyapunov},
        $k_{f,E}$ can be chosen uniformly for $E\in U$.
         By the choice of $\bar\eta$ we have $\bar\eta < ( 1 - a) \ly(E)$ so 
for $k > k_f$,
        \be \label{lebound} \frac{1}{2} \leq \frac{ ( 1 - a )\ly(E) }{ (1-a)\ly(E)+\bar\eta } \leq |V_k\left(a\ly(E),A^{f,E}\right) |  . \ee
        
        Furthermore, we make the following claim regarding the sets $V_k(\cdot,\cdot)$,
        there is some $k_\tau(E) < \infty$ so that for $k > k_\tau(E)$ and
        $|E - {z}| < \exp\{- \ly(E) \tau k\}$,
        \begin{equation}\label{incls}
           V_k(a\ly(E),A^{f,E}) 
            \subset V_k(b\ly(E),A^{f_N,E}) 
            \subset V_k(c\ly(E),A^{g,{z}}).
        \end{equation}
        First note from the assumption on parameters (\ref{p4}),
       \be\label{cparameter}
        \ly(E)(1-\nu) + \bar\eta < \ly(E)(1-\tau/4) + \tau\ly(E)/16 = \ly(E)(1- 3\tau/16) < \ly(E) c 
        \ee
      to show the left inclusion, for $ \theta\in V_k \left(a\ly(E),A^{f,E}\right)$ write
        \[  \left\|A^{f_N,E}_k(\theta)\right\|
	     \geq
                 \left\|A^{f,E}_k(\theta)\right\| - \left\| A^{f_N,E}_k(\theta) -  A^{f,E}_k(\theta)\right\| \]
               from  (\ref{mc1}) and (\ref{fejerpotential}) we have
            \[  \left\| A^{f_N,E}_k(\theta) -  A^{f,E}_k(\theta)\right\|
                  \leq C \cdot N^{-\gamma} e^{k(\ly(E)+\bar\eta)} \leq  Ce^{k(\ly(E)+\bar\eta - \ly(E)\nu)}   
                   <  Ce^{ ck \ly} \]
            having  used the definition of $N$ in the second to last step and 
            (\ref{cparameter}) in the last step. Putting this together,
             we have, 
            \[ \left\|A_k^{f_N,E}(\theta)\right\|
	    >
              e^{ak\ly(E)} - Ce^{ c k \ly(E)}    >  e^{bk\ly(E)}.\]
              Where the final inequality clearly holds for large enough $k$.
          The right inclusion of (\ref{incls}) is similar: for $ \theta\in V_k \left(b\ly(E),A^{f_N,E}\right)  $
	\[
             \left\|A^{g,{z}}_k(\theta)\right\|
             > 
               \left\| A_k^{f_N,E}(\theta) \right\| -  \left\| A_k^{f_N,E}(\theta)  - A_k^{f,E}(\theta) \right\|   -  \left\|A_k^{f,E}(\theta)- A_k^{g,z}(\theta) \right\|\]
         The second term on the right can be bounded as above, the last term on the right is bounded similarly 
         using (\ref{mc1}) and $\epsilon < \bar \eta /\overline{M}$ so that
         $ \left\|A_k^{f,E}(\theta)- A_k^{g,z}(\theta) \right\| 
               <  Ce^{ ck \ly} $.
          Thus, as $b > c$, for large enough $k$ we have,  
         \[   \left\|A^{g,{z}}_k(\theta)\right\| > 
              e^{bk\ly(E)} -  
              Ce^{ck\ly(E)}
             >  e^{ck\ly(E)}\]
              which proves the claim.
              
              As before, note that if $\ly$ is continuous, then $k_\tau(E)$ can be chosen uniform for $E$
              in a compact set $U$, since the only possible source of 
              nonuniformity over $E$ is the requirement that $k > k_{f,E}$.
              
              Now write 
           $V 
            = V_{k}(b\ly(E),A^{f_N,E })$,
          combining (\ref{incls}) and (\ref{lebound}) yields
           $  \left|V\right| 
              \geq \frac{1}{2}.$
         On the other hand, $V$ is defined by a piecewise polynomial function. 
          That is, $\bbT$ is partitioned into $k|J|$ intervals and on each interval
             $\|A^{f_N,E}_k(\theta)\|^2 $ is a polynomial of degree $2k(2N+1) $.
          At least one interval in the partition must have an intersection with $V$ of size $\frac{1}{4}(k|J|)^{-1}$,
          and therefore $V$ must contain 
an interval of length $\frac{1}{4}\frac{1}{k^2(2N+1)|J|}$,
          which is bounded below by $\exp\{- k \tfrac{\tau}{\gamma} \ly(E)\}.$
          It follows from (\ref{incls}) that $V_k(c\ly(E),A^{g,z}) $
	  also contains this interval. 
          We will now use the following fact:

   \begin{lemma}\label{JitoLemma}(e.g. \cite{JL00})
       For an interval $I\subset\bbt$, if $n$ is such that 
       $|I| > \frac{1}{q_n}$ then for any $\theta\in\bbt$ there is 
       $0\leq j \leq q_n + q_{n-1} - 1$ so that $\theta + j\omega \in I$.
   \end{lemma}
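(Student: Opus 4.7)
The plan is to show that the $q_n + q_{n-1}$ points $O_\theta := \{\theta + j\omega \bmod 1 : 0 \leq j \leq q_n + q_{n-1} - 1\}$ partition the circle $\bbt$ into arcs, none of which exceeds length $1/q_n$. Once this is established, any interval $I$ with $|I| > 1/q_n$ must contain at least one point of $O_\theta$, giving the desired conclusion. Since $O_\theta$ is a rigid translate of $O_0 := \{j\omega \bmod 1 : 0 \leq j \leq q_n + q_{n-1} - 1\}$, the gap structure is $\theta$-independent, and it suffices to analyze $O_0$.

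To analyze $O_0$, I would first invoke the two-distance phenomenon at $N = q_n$ (a special case of the three-distance theorem of Steinhaus): the orbit $\{j\omega \bmod 1 : 0 \leq j \leq q_n - 1\}$ partitions $\bbt$ into arcs of exactly two lengths, a short gap of length $\|q_{n-1}\omega\|$ and a long gap of length $\|q_{n-1}\omega\| + \|q_n\omega\|$, where $\|\cdot\|$ denotes the distance to the nearest integer. The standard best-approximation inequalities for continued fractions give both $\|q_n\omega\| < 1/q_{n+1} \leq 1/q_n$ and $\|q_{n-1}\omega\| < 1/q_n$.

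Next, I would show that appending the $q_{n-1}$ additional points indexed $q_n \leq j \leq q_n + q_{n-1} - 1$ has the effect of inserting exactly one new point into each of the long gaps, splitting it into a piece of length $\|q_{n-1}\omega\|$ and a piece of length $\|q_n\omega\|$, while leaving the original short gaps untouched. This follows because the map $x \mapsto x + q_n\omega$ shifts the first $q_{n-1}$ orbit points by a displacement of magnitude $\|q_n\omega\|$ in a direction opposite to the error $q_{n-1}\omega - p_{n-1}$ (consecutive convergent errors alternate in sign). Once verified, every gap of $O_0$ has size at most $\max(\|q_{n-1}\omega\|, \|q_n\omega\|) \leq 1/q_n$, as required.

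The main obstacle is the bookkeeping to confirm that the inserted points land cleanly one apiece in the long gaps rather than piling to one side, which rests on the alternating signs and monotonically decreasing magnitudes of the errors $q_k\omega - p_k$. An equivalent and perhaps more streamlined route is to apply the three-distance theorem directly at $N = q_n + q_{n-1}$ and read off the three possible gap sizes; either approach yields the same bound $1/q_n$ on the maximum gap.
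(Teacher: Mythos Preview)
The paper does not supply a proof of this lemma; it is stated as a known fact with a citation to \cite{JL00} and used as a black box inside the proof of Lemma~\ref{SOlemma2}. So there is no ``paper's proof'' to compare against.

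Your argument is correct and is the standard one. A couple of minor remarks. First, your description of the two gap lengths at $N=q_n$ is right: there are $q_n-q_{n-1}$ short gaps of length $\|q_{n-1}\omega\|$ and $q_{n-1}$ long gaps of length $\|q_{n-1}\omega\|+\|q_n\omega\|$, and the identity $q_n\|q_{n-1}\omega\|+q_{n-1}\|q_n\omega\|=1$ (which follows from $q_{n-1}p_n-q_np_{n-1}=\pm1$ and the alternating signs of the errors) confirms the count. Second, your alternative route is in fact cleaner: at $N=q_n+q_{n-1}$ one again lands on a two-distance configuration, with gaps of length $\|q_{n-1}\omega\|$ (occurring $q_n$ times) and $\|q_n\omega\|$ (occurring $q_{n-1}$ times), and both are strictly below $1/q_n$. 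Citing the three-distance theorem at this value of $N$ lets you skip the ``bookkeeping'' paragraph entirely, since the conclusion you need is precisely the maximum-gap bound, not the mechanism by which the long gaps get split.
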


          For any $ k_\tau < k \leq \frac{ \gamma\ln q_n}{\tau\ly }$ we have
          that $V_k(c\ly(E),A^{g,z}) $ contains an interval of length greater than
           $\exp\{- k \tfrac{\tau}{\gamma} \ly(E)\}$ which in turn is greater than   $ \frac{1}{q_n}$, so for
          some $0\leq x \leq q_n+q_{n-1}-1$ we obtain the result.
	  \qed
          \end{proof}

\section{Acknowledgement} 
    We would like to thank Rui Han for helpful input on an early draft
    and the anonymous referee for careful reading of the
    manuscript that has led to an important improvement.
    S.J. is a 2014-15 Simons Fellow. This research was partially
    supported by NSF DMS-1101578 and  DMS-1401204.

\bibliographystyle{plain}	
\bibliography{noterefs}

\begin{thebibliography}{10}

\bibitem{Aglobal}
Arthur Avila.
\newblock Global theory of one-frequency {Schr\"{o}dinger} operators.
\newblock {\em Acta Mathematica}, 215(1):1 -- 54, 2015.

\bibitem{BGT}
Jean-Marie Barbaroux, F.~Germinet, and Serguei Tcheremchantsev.
\newblock Fractal dimensions and the phenomenon of intermittency in quantum
  dynamics.
\newblock {\em Duke}, 1:161 --193, 2001.

\bibitem{bjerklov}
Kristian Bjerklov.
\newblock {\em Dynamical properties of quasi-periodic {Schr\"{o}dinger}
  equations}.
\newblock PhD thesis, KTH Matematik, 2003.

\bibitem{bbook}
Jean Bourgain.
\newblock {\em Green's function estimates for lattice {Schr\"odinger} operators
  and applications}.
\newblock Princeton Univ. Press, Princeton, 2005.

\bibitem{BJ2001}
Jean Bourgain and Svetlana Jitomirskaya.
\newblock Continuity of the {L}yapunov exponent for quasiperiodic operators
  with analytic potential.
\newblock {\em Journal of statistical physics}, 108:1203 -- 1218, 2001.

\bibitem{C08}
Jackson Chan.
\newblock Method of variations of potential of quasi-periodic {Schr\"{o}dinger}
  equations.
\newblock {\em Geometric and Functional Analysis}, 17(5):1416-- 1478, 2008.

\bibitem{cgs}
Jackson Chan, Micheal Goldstein, and Wilhelm Schlag.
\newblock On non-perturbative {A}nderson localization for $\mathcal{C}^\alpha$
  potentials generated by shifts and skew-shifts.
\newblock {\em preprint}, arXiv:math/0607302 [math.DS], 2006.

\bibitem{dk}
David Damanik and Rowan Killip.
\newblock Ergodic potentials with a discontinuous sampling function are
  non-deterministic.
\newblock {\em Math. Res. Letters}, 212:191 -- 204, 1999.

\bibitem{DT2007}
David Damanik and Serguei Tcheremchantsev.
\newblock Upper bound in quantum transport.
\newblock {\em Jounal of the American Mathematical Society}, 20(3):700 -- 827,
  2007.

\bibitem{DT08}
David Damanik and Serguei Tcheremchantsev.
\newblock Quantum dynamics via complex analysis methods: {General} upper bounds
  without time-averaging and tight lower bounds for the strongly coupled
  {Fibonacci Hamiltonian}.
\newblock {\em Jounal of functional analysis}, 255(10):2872 -- 2887, 2008.

\bibitem{djls}
R.~del Rio, Svetlana Jitomirskaya, Yoram Last, and Barry Simon.
\newblock Operators with singular continuous spectrum.
\newblock {\em J. d'Analyse Math}, 69:153--200, 1996.

\bibitem{F97}
Alex Furman.
\newblock On the multiplicative ergodic theorem for uniquely ergodic systems.
\newblock {\em Probabilities et Statistiques}, 33(6):797-- 815, 1997.

\bibitem{p1}
S.~Ganeshan, K.~Kechedzhi, and S.~Das Sarma.
\newblock Critical integer quantum {Hall} topology and the integrable
  {Maryland} model as a topological quantum critical point.
\newblock {\em Phys. Rev. B}, 90,\ 041405, 2014.

\bibitem{sim60}
Svetlana Jitomirskaya.
\newblock Ergodic {Schr\"{o}dinger} operators (on one foot).
\newblock In {\em Spectral theory and mathematical physics: a Festschrift in
  honor of {Barry Simon's} 60th birthday}, pages 613 -- 647. Amer. Math. Soc.,
  2007.

\bibitem{jk}
Svetlana Jitomirskaya and Ilya Kachkovskiy.
\newblock All coupling localization for quasiperiodic operators with
  {Lipschitz} monotone potentials.
\newblock {\em preprint}, arXiv:1509.02226 [math.SP], 2015.

\bibitem{JL00}
Svetlana Jitomirskaya and Yoram Last.
\newblock Power law subordinacy and singular spectra. {II.} {Line} operators.
\newblock {\em Communications in Mathematical Physics}, 211(3):643 -- 658,
  2000.

\bibitem{JM2}
Svetlana Jitomirskaya and Chris Marx.
\newblock Analytic quasi-perodic cocycles with singularities and the {L}yapunov
  exponent of extended {H}arper’s model.
\newblock {\em Communications in mathematical physics}, 316(1):237--267, 2012.

\bibitem{JMv12}
Svetlana Jitomirskaya and Rajinder Mavi.
\newblock Continuity of the measure of the spectrum for quasiperiodic
  {Schr{\"o}dinger} operators with rough potentials.
\newblock {\em Communications in mathematical physics}, 325(2):585--601, 2014.

\bibitem{js}
Svetlana Jitomirskaya and Herman Schulz-Baldes.
\newblock Upper bounds on wavepacket spreading for random {Jacobi} matrices.
\newblock {\em Communications in Mathematical Physics}, 273(3):601--618, 2007.

\bibitem{jss}
Svetlana Jitomirskaya, Herman Schulz-Baldes, and G\"unter Stolz.
\newblock Delocalization in random polymer models.
\newblock {\em Communications in Mathematical Physics}, 233(1):27--48, 2003.

\bibitem{jz}
Svetlana Jitomirskaya and Shiwen Zhang.
\newblock Quantitative continuity of singular continuous spectral measures and
  arithmetic criteria for quasiperiodic {Schr\"odinger} operators.
\newblock {\em preprint}, arXiv:1510.07086 [math.SP], 2015.

\bibitem{K}
Y.~Katznelson.
\newblock {\em An Introduction to Harmonic Analysis}.
\newblock Cambridge Mathematical Library. Cambridge University Press, 2004.

\bibitem{KW}
Yitzhak Katznelson and Benjamin Weiss.
\newblock A simple proof of some ergodic theorems.
\newblock {\em Israel Journal of Mathematics}, 42:291--296, 1982.

\bibitem{KKL}
Rowan Killip, Alexander Kiselev, and Yoram Last.
\newblock Dynamical upper bounds on wavepacket spreading.
\newblock {\em American Journal of Mathematics}, 125(5):1165--1198, 2003.

\bibitem{K03}
Silvius Klein.
\newblock Anderson localization for the discrete one-dimensional quasi-periodic
  {Schr\"{o}dinger} operator with potential defined by a {G}evrey class
  function.
\newblock {\em Journal of Functional Analysis}, 218:255--292, 2005.

\bibitem{L96}
Yoram Last.
\newblock Quantum dynamics and decompositions of singular continuous spectra.
\newblock {\em Journal of Functional Analysis}, 142(2):406 -- 445, 1996.

\bibitem{jm}
V.A. Mandel'shtam and S.~Ya. Zhitomirskaya.
\newblock 1d-quaisperiodic operators. {Latent} symmetries.
\newblock {\em Commununications in Mathematical Physics}, 39(3):589--604, 1991.

\bibitem{Si07}
Barry Simon.
\newblock Equilibrium measures and capacities in spectral theory.
\newblock {\em Inverse Problems and Imaging}, 1(4):713 -- 772, 2007.

\bibitem{SS91}
Eugene Sorets and Thomas Spencer.
\newblock Discrete one-dimensional quasi-periodic {Schr\"{o}dinger} operators
  with pure point spectrum.
\newblock {\em Communications in Mathematical Physics}, 142(3):543 -- 566,
  1991.

\bibitem{p2}
Mor Verbin, Oded Zilberberg, Yaacov~E. Kraus, Yoav Lahini, and Yaron
  Silberberg.
\newblock Observation of topological phase transitions in photonic
  quasicrystals.
\newblock {\em Phys. Rev. Lett.}, 110,\ 076403, 2013.

\bibitem{W}
Peter Walters.
\newblock {\em An introduction to Ergodic Theory}.
\newblock Springer, 1982.

\bibitem{wy}
Yiqian Wang and Jiangong You.
\newblock Examples of discontinuity of {L}yapunov exponent in smooth
  quasiperiodic cocycles.
\newblock {\em Duke Mathematical Journal}, 162(13):2363--2412, 2013.

\bibitem{wz2}
Yiqian Wang and Zhenghe Zhang.
\newblock Cantor spectrum for a class of {$C^2$} quasiperiodic
  {S}chr{\"o}dinger operators.
\newblock {\em preprint}, arXiv:1410.0101[math.DS], 2014.

\bibitem{wz1}
Yiqian Wang and Zhenghe Zhang.
\newblock Uniform positivity and continuity of {L}yapunov exponents for a class
  of {$C^2$} quasiperiodic {S}chr{\"o}dinger cocycles.
\newblock {\em Journal of Functional Analysis}, 268(9):2525--2585, 2015.

\bibitem{zhenghe}
Zhenghe Zang.
\newblock Positive {Lyapunov} exponents for quasiperiodic {Szeg\"o} cocycles.
\newblock {\em Nonlinearity}, 25:1771, 2012.

\end{thebibliography}

\end{document}